\newcommand{\ket}[1]{|#1\rangle}
\newcommand{\bra}[1]{\langle#1|}
\theoremstyle{plain}
\newtheorem{thm}{Theorem}
\newtheorem{lem}{Lemma}
\newtheorem{fakt}{Fact}
\newtheorem*{defin}{Definition}
\begin{document}

\title{Self-testing maximally-dimensional genuinely entangled subspaces within the stabilizer formalism}
\author{Owidiusz Makuta}
\affiliation{Center for Theoretical Physics, Polish Academy of Sciences, Aleja Lotnik\'{o}w 32/46, 02-668 Warsaw, Poland}

\author{Remigiusz Augusiak}

\affiliation{Center for Theoretical Physics, Polish Academy of Sciences, Aleja Lotnik\'{o}w 32/46, 02-668 Warsaw, Poland}

\begin{abstract}
Self-testing was originally introduced as a device-independent method of certification of entangled quantum states and local measurements performed on them. Recently, in [F. Baccari \textit{et al.}, \href{https://arxiv.org/abs/2003.02285}{arXiv:2003.02285}] the notion of state self-testing has been generalized to entangled subspaces and the first self-testing strategies for exemplary genuinely entangled subspaces have been given. The main aim of our work is to pursue this line of research and to address the question how "large" (in terms of dimension) are genuinely entangled subspaces that can be self-tested, concentrating on the multiqubit stabilizer formalism. To this end, we first introduce a framework allowing to efficiently check whether a given stabilizer subspace is genuinely entangled. Building on it, we then determine the maximal dimension of genuinely entangled subspaces that can be constructed within the stabilizer subspaces and provide an exemplary construction of such maximally-dimensional subspaces for any number of qubits. Third, we construct Bell inequalities that are maximally violated by any entangled state from those subspaces and thus also any mixed states supported on them, and we show these inequalities to be useful for self-testing. Interestingly, our Bell inequalities allow for identification of higher-dimensional face structures in the boundaries of the sets of quantum correlations in the simplest multipartite Bell scenarios in which every observer performs two dichotomic measurements.
\end{abstract}

\maketitle

\section{Introduction}
Nonlocality, or the existence of correlations that violate Bell inequalities, is one of the most characteristic features of quantum theory \cite{review}. Initially studied as an indicator of impossibility of describing quantum correlations in terms of local-realistic models, later nonlocality has been turned into a powerful resource for certain applications such as for instance device-independent certification. Indeed, nonlocality generated by an unknown quantum system allows one to make nontrivial statements about this system. For instance, observation of nonlocal correlations in an unknown quantum system can be used to certify in a device-independent way its dimension \cite{dimension}, that it is entangled \cite{DIentanglement}, and that the outcomes of quantum measurements performed on it are truly random \cite{randomness1,randomness2}. 

Interestingly, it turns out that in certain situations the observed nonlocal correlations enable, basically complete, characterization of the quantum system under study along with quantum measurements performed on it. This strongest form of device-independent certification is referred to as self-testing \cite{MY1,MY2} (see also the recent review \cite{self-review}), and due to its utility certification of quantum systems it has attracted much attention in recent years. Device-independent certification strategies have been designed for various quantum states \cite{McKague,YangNavascues13,Coladangelo,Shubhayan} and/or quantum measurements \cite{obs1,Quantum}. Extensions of self-testing to other scenarios featuring nonclassical effects have also been studied, such as the steering scenario \cite{steering}, the prepare-and-measure scenario \cite{PM}, or the contextuality scenario \cite{context1,Knill,context2}.

Very recently, in Ref. \cite{subspaces} a definition of self-testing of entangled subspaces has been put forward as a natural generalization of state self-testing, and self-testing schemes for genuinely entangled subspaces obtained within the stabilizer formalism were provided such as the one corresponding to the well-known toric code \cite{toric}. Our main aim here is to pursue this research direction and address one of questions Ref. \cite{subspaces} leaves open, namely, what is the maximal dimension of a genuinely entangled subspace that can be self-tested
from maximal violation of Bell inequalities. For simplicity here we focus on subspaces that can be obtained within the stabilizer formalism. To reach our goal, we first introduce a simple and efficient way of checking whether a given stabilizer subspace is genuinely multipartite entangled. Building on it, we then establish what is the maximal dimension of a genuinely entangled subspace within the stabilizer formalism proving the conjecture posed recently in Ref. \cite{Waegell}, and present an exemplary construction of such maximally-dimensional subspaces for any number of qubits. Third, we show that each such subspace can be self-tested from maximal violation of a multipartite Bell inequality constructed with the aid of the approach presented in Refs. \cite{subspaces,graphstates}. We finally use these Bell inequalities to identify nontrivial faces in the boundaries of
the sets of quantum correlations in the simplest Bell scenarios involving two dichotomic observables per observer.

It should be noticed that this work is basically a publication version of Ref. \cite{Magisterka} with a few modifications and additions such as 
the construction of GME subspaces presented in Sec. \ref{SecIIIB}
along with the associated Bell inequalities presented in Appendix \ref{AppB}. 
Also, the proof contained in Sec. \ref{boundaries} is a significantly improved version of that presented in Ref. \cite{Magisterka}.

\section{Preliminaries}
\label{prelim}

Before presenting our results we first introduce
some notions and terminology used throughout our work.
We begin with the notion of genuine multipartite entanglement, 
and then move to Bell scenarios and the stabilizer formalism. 
A reader familiar with these notions can skip this section.

\textit{(1) Genuinely entangled subspaces.} Let us consider $N$ spatially separated parties $P_i$ and assume that they share some pure quantum state $\ket{\psi}$ belonging to a product Hilbert space 
\begin{equation}
    \mathcal{H}_P=\mathcal{H}_{P_1}\otimes\ldots\otimes\mathcal{H}_{P_N},
\end{equation}
where the local Hilbert space $\mathcal{H}_{P_i}$ corresponds to party $P_i$ and we assume it to be finite-dimensional.

Consider then a partition of $N$ observers into two disjoint groups $Q$ and $\overline{Q}$ such that $Q\cup\overline{Q}=\{1,\ldots,N\}$ and denote it $Q|\overline{Q}$. We call such a bipartition nontrivial if both sets $Q$ and $\overline{Q}$ are nonempty. We say that $\ket{\psi}$ is \textit{entangled across} $Q|\overline{Q}$ if it cannot be written as $\ket{\psi}=\ket{\psi_Q}\otimes \ket{\psi_{\overline{Q}}}$ for some pure states $\ket{\psi_Q}$ and $\ket{\psi_{\overline{Q}}}$ supported on $Q$ and $\overline{Q}$, respectively. We then call $\ket{\psi}$ \textit{genuinely multipartite entangled} (GME) if it is entangled across any nontrivial bipartition $Q|\overline{Q}$.

Consider then a notrivial subspace $V\subset \mathcal{H}_P$. Following Ref. \cite{Demianowicz} we call $V$ \textit{genuinely multipartite entangled} (or, shortly, \textit{genuinely entangled}) if all pure states belonging to it are genuinely multipartite entangled. The notion of genuinely entangled subspaces is a natural multipartite generalization of the notion of completely entangled subspaces introduced in the bipartite case \cite{Bhat,Parthasarathy}. A well-known example of such a genuinely entangled subspace in $(\mathbbm{C}^2)^{\otimes 3}$ is the one spanned by the three-qubit Greenberger-Horne-Zeilinger (GHZ) state and the $W$ states
\begin{equation}
\ket{\mathrm{GHZ}_3}=\frac{1}{\sqrt{2}}(\ket{0}^{\otimes 3}+\ket{1}^{\otimes 3}),
\end{equation}
\begin{equation}
\ket{W_3}=\frac{1}{\sqrt{3}}(\ket{001}+\ket{010}+\ket{100}).
\end{equation}
It is not difficult to verify that for any $\alpha,\beta\in\mathbbm{C}$ such that $|\alpha|^2+|\beta|^2=1$, the state $\alpha\ket{\mathrm{GHZ}_3}+\beta\ket{W_3}$ is genuinely entangled.

\textit{(2) The Bell scenario.} Let us now assume that the parties $P_i$ share a mixed state $\rho$ acting on $\mathcal{H}_P$ and that each party on their subsystem of $\rho$ can freely choose to measure one of two quantum observables $A_0^{(i)}$ or $A_1^{(i)}$. Here we consider the simplest case of all observables being dichotomic and, traditionally, denote their eigenvalues by $\pm 1$, meaning in particular that $[A_{x_{i}}^{(i)}]^2=\mathbbm{1}_{P_i}$ for any $x_i=0,1$ and $i=1,\ldots,N$, where $\mathbbm{1}_{P_i}$ is the identity acting on $\mathcal{H}_{P_i}$. 

If this procedure is repeated sufficiently many times, the parties can estimate the collection of expectation values 
\begin{equation}\label{correlations}
    \langle A_{x_{i_1}}^{(i_1)}\cdots  A_{x_{i_k}}^{(i_k)}\rangle = \textrm{Tr}\left[\left(A_{x_{i_1}}^{(i_1)}\otimes \cdots \otimes A_{x_{i_k}}^{(i_k)}\right)\rho \right]
\end{equation}
with $i_1<i_2<\ldots,<i_k$, $k=1,\ldots,N$ and $i_j=1,\ldots,N$. In what follows
we denote this collection by $\mathcal{P}$ and refer to it as to correlations
or behaviour. It is worth adding here that in the considered scenario all correlations obtained by performing measurements on quantum states
form a convex set which we denote $\mathcal{Q}_{N}$. Later in Sec. \ref{boundaries} we identify faces in the boundaries of $\mathcal{Q}_{N}$.

We say that the correlations $\mathcal{P}$ are \textit{nonlocal} 
if they cannot be reproduced by any local hidden variable model, or, equivalently, if they violate Bell inequalities whose most general form in the considered scenario is
\begin{equation}
I_N := \sum_{k=1}^N 
\sum_{\substack{1 \leq i_1<i_2<\ldots<i_k \leq N  \\  x_{i_1},\ldots,x_{i_N}=0,1 }}\alpha^{i_1,\ldots,i_k}_{x_{i_1},\ldots,x_{i_k}}\langle A_{x_{i_1}}^{(i_1)} \ldots A_{x_{i_k}}^{(i_k)}\rangle \leq \beta_c,
\end{equation}
where all $\alpha^{i_1,\ldots,i_k}_{x_{i_1},\ldots,x_{i_k}}\in\mathbbm{R}$
and $\beta_c$ is the maximal value of the Bell expression $I_N$ over all local deterministic correlations, that is, those for which all expectation values 
are product 
\begin{equation}\label{locdet}
\langle A_{x_{i_1}}^{(i_1)} \ldots A_{x_{i_k}}^{(i_k)}\rangle =
\langle A_{x_{i_1}}^{(i_1)}\rangle\cdot \ldots\cdot \langle A_{x_{i_k}}^{(i_k)}\rangle
\end{equation}
and each $\langle A_{x_{i_k}}^{(i_k)}\rangle=\pm1$.

A simple example of a Bell inequality is the one due to 
Clauser, Horne, Shimony and Holte, typically referred to as 
CHSH Bell inequality \cite{CHSH}. 
It is defined in the bipartite scenario ($N=2$) and reads
\begin{equation}\label{CHSH}
    \langle A_0^{(1)} A_0^{(2)}\rangle+\langle A_0^{(1)} A_1^{(2)}\rangle+\langle A_1^{(1)} A_0^{(2)}\rangle-\langle A_1^{(1)} A_1^{(2)}\rangle\leq 2.
\end{equation}
The maximal quantum value of the CHSH Bell inequality is 
$2\sqrt{2}$ and is achieved by 
the two-qubit maximally entangled state $\ket{\psi_{2}^+}=(1/\sqrt{2})(\ket{00}+\ket{11})$ and
the following observables 
\begin{equation}
A_0^{(1)}=\frac{1}{\sqrt{2}}(X+Z),\qquad 
A_1^{(1)}=\frac{1}{\sqrt{2}}(X-Z),
\end{equation}
and $A_0^{(2)}=X$ and $A_1^{(2)}=Z$, where $X$ and $Z$ are the well known Pauli matrices.  It is worth noting that 
after plugging these into the CHSH Bell expression
the resulting Bell operators is a sum of the stabilizing operators
$X\otimes X$ and $Z\otimes Z$ of $\ket{\psi_{2}^+}$. Later we exploit this observation when constructing our Bell inequalities maximally violated by entangled subspaces. 

Let us finally stress that nonlocality is the key resource making the device-independent verification and certification possible: the observers are able to make nontrivial statements about functioning of their devices only if
they observe quantum nonlocality \cite{review}.

\textit{(3) Stabilizer formalism.} Let $X$, $Y$, $Z$ be the Pauli matrices and $G_N$ denote the Pauli group on $N$ qubits which consists of $N$-fold tensor products of $X$, $Y$, $Z$ or $\mathbbm{1}_2$ with the overall factors $\pm 1$ or $\pm i$, where by $\mathbbm{1}_2$ we denote the $2\times 2$ identity matrix.

Let us consider a subgroup $\mathbb{S}$ of $\mathbb{G}_N$ generated by $k$ independent elements of $\mathbb{G}_N$, denoted $G_i$ with $i=1,\ldots,k$. Such a subgroup is called \textit{stabilizer} if it stabilizes a nontrivial subspace $V$ in $(\mathbbm{C}^2)^{\otimes N}$, that is, 
\begin{equation}
G_i\ket{\psi}=\ket{\psi}\qquad (i=1,\ldots, k)
\end{equation}
for any $\ket{\psi}\in V$. In what follows we refer to a subspace stabilized in the above sense as \textit{stabilizer subspace}.  

Importantly, not every subgroup of $\mathbb{G}_N$ stabilizes a nontrivial subspace; for instance, $X\otimes X$ and $-X\otimes X$ generate a subgroup that stabilizes no subspace because it contains $-\mathbbm{1}_2^{\otimes 2}$. It turns out that a necessary and sufficient condition for a subgroup $\mathbb{S}$ to stabilize a nontrivial subspace is that it is Abelian
(or, equivalently, that all generators $G_i$ mutually commute), and that $-\mathbbm{1}_2^{\otimes N}\notin \mathbb{S}$ (see Ref. \cite{NielsenChuang}). Let us notice here that, as proven below, it actually suffices to satisfy the second condition in order for $\mathbb{S}$ to stabilize a nontrivial subspace.
\begin{fakt}
Consider a subgroup $\mathbb{S}=\langle G_1,\ldots,G_k\rangle\subseteq \mathbb{G}_N$.
If $-\mathbbm{1}_2^{\otimes N}\notin \mathbb{S}$, then the generators $G_i$ mutually commute.
\end{fakt}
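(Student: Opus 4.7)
The plan is to prove the contrapositive: if some two generators of $\mathbb{S}$ fail to commute, then $-\mathbbm{1}_2^{\otimes N}$ must lie in $\mathbb{S}$. The key structural fact I will invoke is the standard dichotomy for the Pauli group: any two elements $P,Q\in\mathbb{G}_N$ satisfy either $PQ=QP$ or $PQ=-QP$. This holds because the single-qubit Pauli matrices $X,Y,Z$ pairwise anticommute while each commutes with $\mathbbm{1}_2$, and the tensor-product structure, together with any overall $\pm 1,\pm i$ phase, preserves the dichotomy.

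Granting this, I take the contrapositive direction. Suppose some two generators $G_i,G_j$ anticommute, i.e.\ $G_iG_j=-G_jG_i$. Multiplying this identity on the right by $G_i^{-1}G_j^{-1}$ (which lie in $\mathbb{S}$ since it is a group) yields
\[
G_iG_jG_i^{-1}G_j^{-1}=-G_jG_iG_i^{-1}G_j^{-1}=-\mathbbm{1}_2^{\otimes N}.
\]
The left-hand side is a product of generators and their inverses, so it belongs to $\mathbb{S}$, contradicting the hypothesis $-\mathbbm{1}_2^{\otimes N}\notin\mathbb{S}$. Hence no two generators can anticommute, and by the dichotomy they must all pairwise commute, which is the claim.

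The argument is short, so there is no serious obstacle; the only point worth stating carefully is the Pauli-group dichotomy, and in particular the observation that the $\pm i$ phases allowed in $\mathbb{G}_N$ do not disturb it. Once that is in place, the explicit evaluation of the group commutator $G_iG_jG_i^{-1}G_j^{-1}$ immediately produces the forbidden element $-\mathbbm{1}_2^{\otimes N}$ and closes the proof.
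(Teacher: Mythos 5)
Your proof is correct and follows essentially the same route as the paper: both invoke the commute-or-anticommute dichotomy in $\mathbb{G}_N$ and then exhibit an explicit product of the two offending generators equal to $-\mathbbm{1}_2^{\otimes N}$. The only (minor) difference is that you use the group commutator $G_iG_jG_i^{-1}G_j^{-1}$ where the paper uses $G_iG_jG_iG_j$, which lets you sidestep the paper's extra remark that $G_i^2=-\mathbbm{1}_2^{\otimes N}$ is already excluded by the hypothesis.
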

\begin{proof}
Assume that there is a pair of generators $G_i,G_j$ which do not commute, $[G_i,G_j]\neq 0$.
Due to the definition of $\mathbb{G}_N$ it is not difficult to see that they must anticommute, 
\begin{equation}\label{Raimat}
    \{G_i,G_j\}=0.
\end{equation}
Consider then a product $G_iG_jG_iG_j$. Taking into account 
Eq. (\ref{Raimat}) it can be rewritten as $G_iG_jG_iG_j=-G_i^2G_j^2$.
Now, we know that $G_i^2=G_j^2=\mathbbm{1}_2^{\otimes N}$ (the 
cases $G_i^2=-\mathbbm{1}_2^{\otimes N}$ and $G_i^2=-\mathbbm{1}_2^{\otimes N}$ are ruled out by the assumption), and therefore $G_iG_jG_iG_j=-G_i^2G_j^2=-\mathbbm{1}_2^{\otimes N}$ which contradicts the assumption.
\end{proof}

The dimension of a subspace stabilized by a subgroup satisfying these conditions is a simple function of the number of its generators $k$ as well as the number of qubits $N$: $D_{N,k}=2^{N-k}$; the number of elements of the stabilizer is $|\mathbb{S}|=2^{k}$. Let us also notice there that a stabilizer subspace of dimension $2^k$ might be used to encode $k$ logical qubits; the corresponding vectors belonging to $V_N$ are called quantum code words (for more details see Refs. \cite{GottesmanPhD,Steane,Shor}). 

Let us illustrate these concepts with two examples of states and subspaces stabilized by subgroups of $\mathbb{G}_N$. First we consider the well-known graph states. Consider a graph $G=(V,E)$
where $V$ is the set of vertices matching the parties $P_i$ and $E$ stands for the set of edges connecting vertices.
To each vertex $i$ one associates the following operator 
\begin{equation}\label{StabGrSt}
G_i=X_i\otimes \bigotimes_{j\in N(i)}Z_j,
\end{equation}
where $X_i$ denotes $X$ operator acting on qubit $i$, whereas
$N(i)$ stands for the set of neighbours of the vertex $i$, which are all vertices of $V$ connected to $i$ by an edge. These operators generate a subgroup of $\mathbb{G}_N$ that for any $N$ stabilizes a one-dimensional subspace spanned by the so-called graph state $\ket{\psi_G}$. 
It is worth noting here that Bell inequalities maximally violated by the graph states are known \cite{graphstates0}. Moreover, scalable self-testing schemes for them have recently been given in Ref. \cite{graphstates}.

Let us now provide an example of a stabilizer generating a GME subspace of dimension higher than one. To this aim, we consider
the stabilizer corresponding to the five-qubit error correction code (see, e.g., \cite{NielsenChuang}), which is generated by the following four operators acting on  $(\mathbbm{C}^{2})^{\otimes 5}$:
\begin{eqnarray}\label{Stab5Qubit}
    G_1&=&X_1Z_2Z_3X_4,\nonumber\\
    G_2&=&X_2Z_3Z_4X_5,\nonumber\\
    G_3&=&X_1X_3Z_4Z_5,\nonumber\\
    G_4&=&Z_1X_2X_4Z_5,
\end{eqnarray}
where for simplicity we skip the tensor product.
By direct inspection one confirms that all these operators mutually commute and that one cannot generate $-\mathbbm{1}^{\otimes 5}_2$ by taking their product. Thus, $\mathbb{S}_{\mathrm{5-qubit}}=\langle G_1,G_2,G_3,G_4\rangle$ spans a nontrivial subspace whose dimension is $2^{5-4}=2$, as the number of independent generators is four. This subspace is genuinely entangled and a self-testing method based on maximal violation of a Bell inequality has recently been designed in Ref. \cite{subspaces}. One of our aims here is to generalize the results of the latter work and construct stabilizer subspaces of maximal dimension for any number of qubits which can be self-tested from maximal violation of certain multipartite Bell inequalities.

\section{Criteria for stabilizer subspaces to be genuinely entangled}
\label{SecIII}

In our work we aim at providing self-testing statements for stabilizer subspaces that are genuinely entangled, whereas it is know that not every stabilizer subspace is genuinely entangled. An example of a non-GME stabilizer subspace is the one spanned by the codewords of the famous Shor's code \cite{Shor}, that is, $\ket{\mathrm{GHZ}_3}^{\otimes 3}$ and $\ket{\mathrm{GHZ}_3^{-}}^{\otimes 3}$ with  
\begin{equation}
    \ket{\mathrm{GHZ}_3^{-}}=\frac{1}{\sqrt{2}}(\ket{000}-\ket{111}).
\end{equation}

Therefore, we need a condition allowing to check whether a given 
stabilizer subspace is genuinely multipartite entangled. A simple sufficient for a subspace to be GME condition that does the job was recently formulated in Ref. \cite{subspaces}. Here we show it to be necessary too. Building on it we then introduce another necessary and sufficient condition for a stabilizer subspace to be GME, which not only allows for deciding whether a stabilizer subspace is GME or not, but can also be used to construct such subspaces. 

Consider a stabilizer $\mathbb{S}$ generated 
by $k$ stabilizing operators $G_i$ $(i=1,\dots,k)$ and 
a nontrivial bipartition $Q|\overline{Q}$.
With respect to this bipartition each generator can be written as
\begin{equation}\label{division}
    G_i=G_i^Q\otimes G_i^{\overline{Q}},
\end{equation}
where each $G_i^{Q}$ $(G_i^{\overline{Q}})$ acts on the Hilbert space associated to the group $Q$ ($\overline{Q}$). 

Let us now formulate our criterion.

\begin{thm}\label{fakt3}
Consider an abelian subgroup $\mathbb{S}$ of $\mathbb{G}_N$ generated by $G_i$ $(i=1,\ldots,k)$ such that $-\mathbbm{1}^{\otimes N}_2\notin\mathbb{S}$. 
For every nontrivial bipartition $Q|\overline{Q}$ 
there exist $i\neq j$ such that 
\begin{equation}\label{thm1cond}
\left\{G_i^{Q},G_j^{Q}\right\}=0
\end{equation}
iff the subspace $V$ stabilized by $G_i$ is genuinely multipartite entangled.
\end{thm}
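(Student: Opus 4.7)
My plan is to prove both directions of the equivalence through a single observation: a product state $|\psi_Q\rangle\otimes|\psi_{\overline{Q}}\rangle$ lies in $V$ iff $|\psi_Q\rangle$ is a common eigenvector of $\{G_i^Q\}_i$, $|\psi_{\overline{Q}}\rangle$ is a common eigenvector of $\{G_i^{\overline{Q}}\}_i$, and the corresponding eigenvalue tuples $\lambda,\mu$ on the two sides satisfy the matching condition $\lambda_i\mu_i=1$ for every $i$. Throughout I may assume without loss of generality that each $G_i^Q$ and $G_i^{\overline{Q}}$ is Hermitian with $\pm 1$ eigenvalues: any imaginary prefactor on $G_i^Q$ can be absorbed by the opposite imaginary prefactor on $G_i^{\overline{Q}}$, which preserves both $G_i=G_i^Q\otimes G_i^{\overline{Q}}$ and all (anti)commutation relations between the local factors.

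The direction ``anticommuting-pair condition $\Rightarrow$ $V$ is GME'' is the easier one. Take $|\psi\rangle\in V$ and any nontrivial bipartition $Q|\overline{Q}$; if $|\psi\rangle=|\psi_Q\rangle\otimes|\psi_{\overline{Q}}\rangle$ were product across it, the observation would force $|\psi_Q\rangle$ to be a simultaneous eigenvector of every $G_i^Q$. But the hypothesis supplies $i\neq j$ with $\{G_i^Q,G_j^Q\}=0$: applying this relation to any shared eigenvector with eigenvalues $\lambda,\mu\in\{\pm 1\}$ gives $\lambda\mu=-\lambda\mu$, a contradiction. For the converse I argue contrapositively: fix a bipartition on which all $G_i^Q$ pairwise commute and construct a product state in $V$. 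Abelianness of $\mathbb{S}$ then forces all $G_i^{\overline{Q}}$ to commute as well, so both local families are simultaneously diagonalisable; letting $\Lambda^Q,\Lambda^{\overline{Q}}\subseteq\{\pm 1\}^k$ denote the achievable eigenvalue tuples on the two sides, the observation reduces the task to showing $\Lambda^Q\cap\Lambda^{\overline{Q}}\neq\emptyset$, after which a direct computation with $P_V=2^{-k}\sum_{s\in\mathbb{S}}s$ confirms that the tensor product of matching common eigenvectors lies in $V$.

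The main obstacle is this non-emptiness claim, which I would resolve by passing to $\mathbb{F}_2$-linear algebra. Writing $\lambda_i=(-1)^{l_i}$ turns each $\Lambda$ into the affine solution set of a linear system $\langle n,l\rangle=e(n)\pmod 2$, where $n$ ranges over the subgroup $R^Q=\{n\in\mathbb{F}_2^k:\prod_i(G_i^Q)^{n_i}=\pm\mathbbm{1}^Q\}$ of ``trivial-on-$Q$'' relations and $e^Q(n)$ records the sign of the product (and analogously for $\overline{Q}$). The combined system is solvable iff $e^Q$ and $e^{\overline{Q}}$ agree on $R^Q\cap R^{\overline{Q}}$; both hypotheses bite here, since any $n$ in this intersection satisfies $\prod_i G_i^{n_i}=\pm\mathbbm{1}^{\otimes N}\in\mathbb{S}$, which by $-\mathbbm{1}^{\otimes N}\notin\mathbb{S}$ together with the independence of the generators forces $n=0$. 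Both sign maps vanish at $n=0$, consistency is automatic, and the required matching eigenvalue tuple exists.
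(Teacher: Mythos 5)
Your proof is correct, and the forward direction (anticommuting pair $\Rightarrow$ GME) is the same as the paper's: a product state across $Q|\overline{Q}$ forces every $G_i^Q$ to act as a scalar on $\ket{\psi_Q}$, which is incompatible with an anticommuting pair. In the converse both you and the paper start from the same reduction --- if all $G_i^Q$ pairwise commute then so do all $G_i^{\overline{Q}}$, and the full Hilbert space admits a product eigenbasis of all generators --- but you finish differently. The paper simply notes that $V$, being the joint $+1$ eigenspace of operators diagonal in a product basis, is spanned by product basis vectors and is nonempty (since $\mathbb{S}$ is a bona fide stabilizer), hence contains a product state. You instead make the nonemptiness explicit: you encode the achievable joint-eigenvalue tuples on each side as affine solution sets over $\mathbb{F}_2$ and check solvability of the combined system, which pinpoints exactly where $-\mathbbm{1}_2^{\otimes N}\notin\mathbb{S}$ enters. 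This buys a self-contained argument (you re-derive the nontriviality of $V$ rather than importing it), at the cost of leaning on the asserted-but-unproved, though standard, fact that every sign pattern consistent with the relations of $\langle G_1^Q,\ldots,G_k^Q\rangle$ is attained by some joint eigenvector. Two minor simplifications: the appeal to independence of the generators is unnecessary, since for $m\in R^Q\cap R^{\overline{Q}}$ one has $\prod_i G_i^{m_i}=(-1)^{e^Q(m)+e^{\overline{Q}}(m)}\mathbbm{1}_2^{\otimes N}$ and the hypothesis $-\mathbbm{1}_2^{\otimes N}\notin\mathbb{S}$ already forces $e^Q(m)=e^{\overline{Q}}(m)$ whether or not $m=0$; and once a matching tuple is found, membership of the product vector in $V$ follows immediately from $\lambda_i^2=1$, with no need for the projector $P_V$.
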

Before we prove this theorem it is worth noting that
the condition (\ref{thm1cond}) can be reformulated as 
$\{G_i^{\overline{Q}},G_j^{\overline{Q}}\}=0$, which is due 
to the fact that all $G_i$ mutually commute, so for any bipartition
$Q|\overline{Q}$ either 
\begin{equation}\label{Com}
\left[G_i^Q,G_j^Q\right]=0\quad\mathrm{and}\quad   \left[G_i^{\overline{Q}},G_j^{\overline{Q}}\right]=0,
\end{equation}
or
\begin{equation}\label{antiCom}
 \left\{G_i^Q,G_j^Q\right\}=0\quad\mathrm{and}\quad   \left\{G_i^{\overline{Q}},G_j^{\overline{Q}}\right\}=0
\end{equation}
holds true.
\begin{proof}The proof of the "if" part was given in Ref. \cite{subspaces}, however, to make the paper self-contained, we also present it here. Let us assume that the subspace $V$ stabilized by the subgroup $\mathbb{S}$ is not genuinely entangled. This means that $V$ contains a non-genuinely entangled state, denoted $\ket{\psi}$, which for some nontrivial bipartition $Q|\overline{Q}$ can be written as 
\begin{equation}\label{vir2}
    \ket{\psi}=\ket{\psi_Q}\otimes\ket{\psi_{\overline{Q}}},
\end{equation}
for some pure states $\ket{\psi_Q}$ and $\ket{\psi_{\overline{Q}}}$. From the fact that $\ket{\psi}$ is stabilized by every generator $G_i$,
one concludes by virtue of Eqs. (\ref{division}) and (\ref{vir2}) that with respect to the bipartition $Q|\overline{Q}$,
\begin{equation}
    G_i^Q\ket{\psi_Q}=\mathrm{e}^{\mathrm{i}\phi_i}\ket{\psi_Q},
\end{equation}
for any $i$, where $\phi_i\in\mathbbm{R}$.
This in turn implies that for any pair $G_i^Q$
and $G_j^Q$ with $i\neq j$, 
\begin{equation}
    \left\{G_i^Q,G_j^Q\right\}\ket{\psi}=2\mathrm{e}^{\mathrm{i}
    (\phi_i+\phi_j)}\ket{\psi_Q},
\end{equation}
which contadicts the assumption that for the considered bipartition there exists a pair of anticommuting operators $G_i^Q$, $G_j^Q$.

Let us now move on to the "only if" part. Assume that $V$ is genuinely multipartite entangled and that there exists a nontrivial bipartition $Q|\overline{Q}$ such that for every pair of generators $G_i,G_j$ with $i\neq j$, 
\begin{equation}
    \left\{G_i^Q,G_j^Q\right\}\neq 0.
\end{equation}
Due to the fact that all generators commute, this implies that
\begin{equation}\label{com1}
    \left[G_i^Q,G_j^Q\right]=0\quad\mathrm{and}\quad \left[G_i^{\overline{Q}},G_j^{\overline{Q}}\right]=0
\end{equation}
for every pair $i\neq j$ [cf. Eqs. (\ref{Com}) and (\ref{antiCom})]. 
As a result all $G_i^Q$ (as well as $G_i^{\overline{Q}}$) are diagonal in the same basis, that is, there exist orthonormal bases
$\{\ket{\phi_i}\}$ and
$\{\ket{\varphi_i}\}$ spanning the Hilbert spaces
corresponding to the groups $Q$ and $\overline{Q}$ such that
\begin{equation}
    G_i^Q\ket{\phi_j}=\lambda_j^{(i)}\ket{\phi_j}.
\end{equation}
and
\begin{equation}
    G_i^{\overline{Q}}\ket{\varphi_j}=\widetilde{\lambda}_j^{(i)}\ket{\varphi_j}
\end{equation}
for some $\lambda_j^{(i)},\widetilde{\lambda}_j^{(i)}\in\mathbbm{R}$.
Clearly, the product vectors $\ket{\phi_i}\ket{\varphi_j}$ form a basis of the total Hilbert space $(\mathbbm{C}^2)^{\otimes N}$ and at the same time are eigenvectors of all the stabilizing operators $G_i$. Consequently, the subspace $V$ contains pure states that are separable across the bipartition
$Q|\overline{Q}$, unless the space stabilized by $G_i$ is empty. In both cases we are in contradiction with the assumptions, which completes the proof.
\end{proof}

Let us emphasize that Theorem \ref{fakt3} gives us a way of determining whether a subspace $V$ stabilized by $\mathbb{S}$ is GME by looking at its generators instead of verifying whether each vector from $V$ is GME. While this already significantly facilitates checking if a given stabilizer subspace is genuinely entangled, in what follows we reformulate Theorem \ref{fakt3} to further simplify this task.

\subsection{Another iff criterion for a stabilizer subspace to be GME}

Let us consider the $\mathbb{Z}_2$ field, which consists of the set $\{0,1\}$ equipped with addition and multiplication mod $2$. Consider then a vector space $F_N=\mathbb{Z}_2^N$ over the field $\mathbb{Z}_2$, consisting of $N$-dimensional vectors whose entries are from $\{0,1\}$. By $e_i$ we denote the standard basis of $F_i$ consisting of vectors with exactly one $1$ at the $i$th position and zeros elsewhere. Notice that for any $f\in F_N$, $f\oplus f$=0.

Now, we consider a subset of $F_N$, denoted $W_N$ and composed of those elements of $F_N$ that have even number of ones. It is not difficult to see that 
$W_N$ is a proper subspace of $F_N$ because addition of two ''even''
vectors from $W_N$ can result only in an "even" vector. The dimension of $W_N$ is $\dim W_N=N-1$ because every vector containing an even number of ones can be written as a linear combination of vectors 
\begin{eqnarray}\label{vectorsW}
    w_1&=&e_1+e_2\equiv (1,1,0,\ldots,0),\nonumber\\
    w_2&=&e_2+e_3\equiv (0,1,1,0,\ldots,0),\nonumber\\
    &\vdots &\nonumber\\
    w_{N-1}&=&e_{N-1}+e_N\equiv (0,\ldots,0,1,1)
\end{eqnarray}
containing two ones, there is $N-1$ such vectors and they are linearly independent. 

Let us now make use of this formalism to introduce our criterion. 
To this end, consider again a stabilizer $\mathbb{S}$ generated by $G_i$
$(i=1,\ldots,k)$. To each pair $G_i$ and $G_j$ with $i\neq j$, we associate
a vector from $F_N$ defined as
\begin{equation}\label{vectors}
v_{i,j}=\sum_{k=1}^N e_k \delta_k(G_i,G_j),
\end{equation}
where $\delta_k(G_i,G_j)$ is a two-valued real function such that
$\delta_k(G_i,G_j)=0$ or $\delta_k(G_i,G_j)=1$ if the local Pauli operators appearing at the $k$th position in $G_i$ and $G_j$ commute or anticommute, respectively. In other words, $v_{ij}$ is a vector that contains all information about commutation/anticommutation of local Pauli matrices in $G_i$ and $G_j$. 

To illustrate how the vectors $v_{ij}$ are created with an example, let us consider a stabilizer generated by the following two operators
\begin{equation}
    G_1=X\otimes X\otimes \mathbbm{1},\qquad G_2=Z\otimes Z\otimes\mathbbm{1}.
\end{equation}
Clearly, the Pauli matrices appearing on the first two positions of $G_1$ and $G_2$ anticommute, whereas they commute at the third position and therefore the vector $v_{1,2}$ associated to $G_1$ and $G_2$ reads
\begin{equation}
    v_{1,2}=e_1+e_2\equiv (1,1,0).
\end{equation}

In general, given a stabilizer $\mathbb{S}$ generated by $k$ operators $G_k$ we associate to it $k(k-1)/2$ vectors $v_{ij} $which, due to the fact that all $G_i$ mutually commute, have even number of ones and thus all belong to $W_N$. In fact, for any stabilizer $\mathbb{S}$ these vectors span a subspace of $W_N$, which we denote $K(\mathbb{S})$.

Our aim now is to reformulate Theorem \ref{fakt3} in terms of subspaces $K(\mathbb{S})$. 
To do so, we need to introduce a few additional notions regarding bipartitions
of the set $\{1,\ldots,N\}$. Consider again a stabilizer $\mathbb{S}$ generated by $G_i$ and the associated vectors $v_{i,j}$. Consider then a nontrivial bipartition $Q|\overline{Q}$. We call it \textit{even} or \textit{odd} on a vector $v_{i,j}$ if this vector has, respectively, even or odd number of ones in $Q$ and $\overline{Q}$. 
We then call a bipartition $Q|\overline{Q}$ \textit{even on $K(\mathbb{S})$} if it is even on every vector from $K(\mathbb{S})$, and call it \textit{odd on} $K(\mathbb{S})$ if it is odd on at least one vector from $K(\mathbb{S})$.

Having this, we can now reformulate Theorem \ref{fakt3} in terms of the vectors from $K(\mathbb{S})$.
\begin{lem}\label{Lem0}
A stabiliser subspace $V$ is genuinely multipartite entangled iff all nontrivial bipartitions are odd on the corresponding $K(\mathbb{S})$.
\end{lem}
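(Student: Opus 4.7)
The plan is to reduce Lemma \ref{Lem0} to Theorem \ref{fakt3} by translating the anticommutation condition $\{G_i^Q,G_j^Q\}=0$ into a parity condition on the vector $v_{i,j}$ relative to the bipartition. The key elementary observation is that two tensor products of Pauli operators anticommute if and only if they anticommute at an odd number of positions. Applied to $G_i^Q$ and $G_j^Q$, this says that $\{G_i^Q,G_j^Q\}=0$ iff the restriction of $v_{i,j}$ to $Q$ contains an odd number of ones, i.e., iff the bipartition $Q|\overline{Q}$ is odd on the vector $v_{i,j}$. (Equivalently, since all $G_i$ commute globally, $v_{i,j}\in W_N$, so oddness in $Q$ is the same as oddness in $\overline{Q}$, consistent with Eqs.~(\ref{Com})--(\ref{antiCom}).)

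With this translation in hand, the ``only if'' direction is immediate. Assume $V$ is GME. By Theorem \ref{fakt3}, for every nontrivial bipartition $Q|\overline{Q}$ there is a pair $i\neq j$ with $\{G_i^Q,G_j^Q\}=0$, hence $v_{i,j}\in K(\mathbb{S})$ is odd on that bipartition. Thus every nontrivial bipartition is odd on $K(\mathbb{S})$.

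For the ``if'' direction I would argue contrapositively via the same dictionary. Fix a nontrivial bipartition $Q|\overline{Q}$ and suppose it is odd on $K(\mathbb{S})$, so there exists $v\in K(\mathbb{S})$ with an odd number of ones in $Q$. Writing $v$ as a $\mathbb{Z}_2$-linear combination $v=v_{i_1,j_1}\oplus\ldots\oplus v_{i_m,j_m}$ of the spanning vectors of $K(\mathbb{S})$, the parity of $|v\cap Q|$ equals the $\mathbb{Z}_2$-sum of the parities of $|v_{i_a,j_a}\cap Q|$. Since this sum is odd, at least one summand $v_{i_a,j_a}$ is itself odd on $Q$, which by the initial observation gives $\{G_{i_a}^Q,G_{j_a}^Q\}=0$. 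As this holds for every nontrivial bipartition, Theorem \ref{fakt3} yields that $V$ is GME.

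The only point requiring any care is the last step: the transfer of oddness from a generic element of $K(\mathbb{S})$ back to one of the generating vectors $v_{i,j}$. But this is just linearity of the parity functional over $\mathbb{Z}_2$, so no real obstacle appears. The value of the reformulation is precisely here — instead of checking anticommutation for every pair of generators across every bipartition, one only needs to inspect any spanning set of $K(\mathbb{S})$, which is the feature exploited later to construct maximally-dimensional GME stabilizer subspaces.
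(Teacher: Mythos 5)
Your proof is correct and follows essentially the same route as the paper: both directions reduce to Theorem \ref{fakt3} via the dictionary between anticommutation of $G_i^Q,G_j^Q$ and oddness of $v_{i,j}$ on the bipartition. The one step you spell out more explicitly than the paper's proof — transferring oddness from a generic element of $K(\mathbb{S})$ back to one of the spanning vectors $v_{i,j}$ by $\mathbb{Z}_2$-linearity of the parity functional — is exactly the content of the paper's Lemma \ref{Lem1}, so no genuinely different argument is involved.
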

\begin{proof}Let us begin with the "if" part and assume that all nontrivial bipartitions on $K(\mathbb{S})$ are odd. This means, via our construction of $v_{i,j}$,
that the generators $G_{i}$ and $G_j$ anticommute on any bipartition, which 
via Theorem \ref{fakt3} means that the subspace $V$ is genuinely entangled.

For the "only if" part let us assume that the subspace $V$ is genuinely entangled. By virtue of Theorem \ref{fakt3} this implies that for any 
nontrivial bipartition $Q|\overline{Q}$ there exists a pair of stabilizing
operators $G_i$ and $G_j$ $(i\neq j)$ such that $G_i^Q$ and $G_j^Q$ anticommute, 
which implies that the corresponding vector $v_{i,j}$ has odd number of
ones in $Q$ and $\overline{Q}$, and thus any bipartition on $K(\mathbb{S})$ is odd.
\end{proof}

We are now ready to introduce our main necessary and sufficient criterion 
for a stabilizer subspace to be genuinely entangled.
To this end, we introduce the following bilinear form 
$h(\cdot,\cdot):F_N\times F_N\to \mathbb{Z}_2$ defined as
\begin{equation}
h(v,w)=\sum_{i=1}^N v_iw_i,
\end{equation}
where $v_i$ and $w_i$ are vectors elements of $v$ and $w$ in the standard basis $\{e_i\}$. 

Let us then notice that each bipartition $Q|\overline{Q}$ can be represented within this new formalism by a vector from $F_N$ defined as
\begin{equation}
\phi=\sum_{i\in Q}e_i.    
\end{equation}
That is, it has ones at those positions that belong to $Q$ and zeros elsewhere.
This representation gives us an easy way of determining whether a given bipartition is odd or even on a vector $v$: it is odd if $h(\phi,v)=1$ and even if $h(\phi,v)=0$. 

We can now prove the following lemma.
\begin{lem}\label{Lem1}
Let $u_1,\ldots,u_m$ be linearly independent vectors from $W_N$. If some bipartition $\phi$ is odd on $u_1+\ldots+u_m$, then it is odd on the set $\{u_1,\ldots,u_m\}$.
\end{lem}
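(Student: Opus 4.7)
The plan is to reduce this statement to a one-line calculation using the bilinearity of $h(\cdot,\cdot)$ over $\mathbb{Z}_2$ together with the translation, established just above the lemma, that "$\phi$ is odd on $v$" is the same as $h(\phi,v)=1$. Recall also the convention introduced earlier that a bipartition is odd on a set of vectors iff it is odd on at least one member of that set. So what I need to show is: if $h(\phi,u_1+\ldots+u_m)=1$, then $h(\phi,u_i)=1$ for some $i\in\{1,\ldots,m\}$.

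The proof proceeds in two short steps. First, I would expand the left-hand side using the linearity of $h$ in its second argument over $\mathbb{Z}_2$:
\begin{equation}
h(\phi,u_1+\ldots+u_m)=h(\phi,u_1)+\ldots+h(\phi,u_m)\pmod 2.
\end{equation}
Second, since by hypothesis this sum equals $1$ in $\mathbb{Z}_2$, not every summand can vanish: there must exist some index $i$ with $h(\phi,u_i)=1$. Unpacking the definition, this says that $\phi$ is odd on $u_i$, and therefore $\phi$ is odd on the set $\{u_1,\ldots,u_m\}$, which is exactly the claim.

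There is essentially no obstacle here — the content of the lemma is precisely $\mathbb{Z}_2$-bilinearity of $h$. It is worth remarking that the assumption of linear independence of $u_1,\ldots,u_m$ is in fact not used in this direction; the statement holds for arbitrary vectors in $W_N$. The hypothesis is presumably included because in the intended application the $u_i$ will be taken as a basis of $K(\mathbb{S})$, so that the lemma can be contrapositively phrased as: to verify that a bipartition $\phi$ is odd on every nonzero vector of $K(\mathbb{S})$ (equivalently, on the whole set), it suffices to check that $\phi$ is odd on the basis vectors, dramatically reducing the number of conditions one has to test when combined with Lemma~\ref{Lem0}.
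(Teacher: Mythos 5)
Your proof is correct and follows exactly the same route as the paper's: expand $h(\phi,u_1+\ldots+u_m)$ by $\mathbb{Z}_2$-linearity and observe that a sum equal to $1$ must have a nonzero summand. Your side remark that the linear-independence hypothesis is not actually needed is also accurate.
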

\begin{proof}If $\phi$ is odd on the vector $u_1+\ldots+u_m$, then 
\begin{equation}
    h\left(\phi,u_1+\ldots+u_m\right)=\sum_{i=1}^{m}h (\phi,u_i)=1.
\end{equation}
The last equation implies that for at least one vector $u_i$, $h(\phi,u_i)=1$, which according to our definition means that the representation $\phi$ is odd on 
the set $\{u_1,\ldots,u_m\}$.
\end{proof}

Having all this we can formulate our main criterion.

\begin{thm}\label{Crit2}
A subspace $V\subset (\mathbbm{C}^2)^{\otimes N}$ stabilized by $\mathbb{S}$
is genuinely entangled iff $\dim K(\mathbb{S})=N-1$.
\end{thm}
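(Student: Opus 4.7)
The plan is to translate the condition of Lemma~\ref{Lem0} into a purely linear-algebraic statement about orthogonal complements with respect to the bilinear form $h$, and then invoke the standard dimension identity over $\mathbb{Z}_2$.

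First I would rephrase Lemma~\ref{Lem0} in the language of $h$. By definition, a bipartition vector $\phi\in F_N$ is odd on $K(\mathbb{S})$ iff there exists $v\in K(\mathbb{S})$ with $h(\phi,v)=1$, that is, iff $\phi$ fails to lie in the orthogonal complement
\[
K(\mathbb{S})^{\perp}=\{\phi\in F_N : h(\phi,v)=0\ \text{for every}\ v\in K(\mathbb{S})\}.
\]
Lemma~\ref{Lem1} ensures that this test can be carried out on any basis of $K(\mathbb{S})$, so the reformulation is faithful. Lemma~\ref{Lem0} then becomes: $V$ is GME iff every nontrivial bipartition (every $\phi\in F_N$ other than $0$ and the all-ones vector $\mathbf{1}=(1,\ldots,1)$, which jointly describe the two trivial "bipartitions") does not lie in $K(\mathbb{S})^{\perp}$.

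Second, I would observe that $\{0,\mathbf{1}\}$ is always contained in $K(\mathbb{S})^{\perp}$. Indeed $0\in K(\mathbb{S})^{\perp}$ is immediate, and since $K(\mathbb{S})\subseteq W_N$ consists of even-weight vectors, $h(\mathbf{1},v)=\sum_i v_i=0\pmod 2$ for all $v\in K(\mathbb{S})$, whence $\mathbf{1}\in K(\mathbb{S})^{\perp}$. Combining this with the previous paragraph, the GME condition is equivalent to the exact equality $K(\mathbb{S})^{\perp}=\{0,\mathbf{1}\}$, that is $\dim K(\mathbb{S})^{\perp}=1$.

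Finally, since $h$ is the standard dot product on $\mathbb{Z}_2^N$ it is non-degenerate, so applying rank--nullity to the linear map $F_N\to K(\mathbb{S})^{*}$ sending $\phi\mapsto h(\phi,\cdot)|_{K(\mathbb{S})}$ yields
\[
\dim K(\mathbb{S})+\dim K(\mathbb{S})^{\perp}=N.
\]
Hence $\dim K(\mathbb{S})^{\perp}=1$ iff $\dim K(\mathbb{S})=N-1$, which proves the theorem. The only point that needs any care is the dimension identity over $\mathbb{Z}_2$: the form $h$ is not definite (it vanishes on every even-weight vector in the diagonal), so one must invoke non-degeneracy rather than an orthogonal decomposition, but this is routine. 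Note that because $K(\mathbb{S})\subseteq W_N$ and $\dim W_N=N-1$, the criterion in fact says that $K(\mathbb{S})$ must saturate the even-weight subspace, $K(\mathbb{S})=W_N$, which is the feature the authors will presumably exploit when constructing maximally-dimensional GME stabilizer subspaces.
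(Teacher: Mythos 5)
Your proof is correct, but it takes a genuinely different and noticeably slicker route than the paper's. The paper proves the two directions separately and by hand: for the forward implication it runs a pigeonhole argument --- if $\dim K(\mathbb{S})=m<N-1$, the $2^{N-1}-1$ nontrivial bipartitions cannot all yield distinct parity patterns $(h(\phi,u_1),\ldots,h(\phi,u_m))\in\mathbb{Z}_2^m$, so two of them agree on a basis of $K(\mathbb{S})$ and their sum is a nontrivial bipartition that is even on $K(\mathbb{S})$, contradicting Lemma~\ref{Lem0}; for the converse it notes that $\dim K(\mathbb{S})=N-1$ forces $K(\mathbb{S})=W_N$ and then exhibits, for each nontrivial $\phi$, an explicit weight-two witness $e_{i_0}+e_{i_1}\in W_N$ with $h(\phi,e_{i_0}+e_{i_1})=1$, which it expands in a basis of the $v_{i,j}$ to conclude via Lemma~\ref{Lem1}. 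You collapse both directions into the single duality identity $\dim K(\mathbb{S})+\dim K(\mathbb{S})^{\perp}=N$ for the non-degenerate dot product on $\mathbb{Z}_2^N$, combined with the observations that $\{0,(1,\ldots,1)\}\subseteq K(\mathbb{S})^{\perp}$ holds automatically because $K(\mathbb{S})\subseteq W_N$, and that genuine entanglement is precisely the statement $K(\mathbb{S})^{\perp}=\{0,(1,\ldots,1)\}$. Your caveat about characteristic $2$ is well placed: since every even-weight vector is self-orthogonal one cannot appeal to an orthogonal direct-sum decomposition, and the rank--nullity form of the duality statement is the right tool. What the paper's argument buys is self-containedness --- it never invokes the duality formula, only counting and explicit witnesses --- while yours is shorter, handles both implications at once, and makes transparent \emph{why} the threshold sits exactly at $N-1$: the one unavoidable dimension of degeneracy in $K(\mathbb{S})^{\perp}$ is the all-ones vector, which represents the trivial bipartition.
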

\begin{proof}Let us begin with the "$\Rightarrow$" part of the proof and assume that the subspace $V$ is genuinely entangled but $\dim K(\mathbb{S})=m<N-1$. The fact that $V$ is genuinely entangled, implies by Lemma \ref{Lem0} that all the nontrivial bipartitions are odd on $K(\mathbb{S})$. 

Let us then consider some basis in $K(\mathbb{S})$, denoted $\{u_i\}_{i=1}^{m}$. Due to Lemma \ref{Lem1} every nontrivial bipartition is also odd on this basis. Since there are $2^{N-1}-1$ nontrivial bipartitions and $m<N-1$ basis vectors, there exist two different nontrivial bipartitions represented by $\phi_1,\phi_2\in F_d$ such that
\begin{equation}
    h(\phi_1,u_i)=h(\phi_2,u_i)
\end{equation}
for each $i=1,\ldots,m$. Consider now another bipartition $\phi=\phi_1+\phi_2$
which is also nontrivial because $\phi_i$ are nontrivial. Clearly,
\begin{equation}
h(\phi,u_i)=h(\phi_1,u_i)+h(\phi_2,u_i)=0    
\end{equation}
and consequently $\phi$ is even on the basis $\{u_1,\ldots,u_m\}$, which by virtue of Lemma \ref{Lem1} means that $\phi$ is also even on $K(\mathbb{S})$, contradicting that $V$ is genuinely entangled.

Let us now move on to the "$\Leftarrow$" part and assume that $\dim K(\mathbb{S})=N-1$. 
We will show that the latter implies that every nontrivial bipartition is odd on $K(\mathbb{S})$, which by virtue of Lemma \ref{Lem0} means that the subspace $V$ is genuinely entangled and the desired implication follows.

Let us therefore consider a nontrivial bipartition $Q|\overline{Q}$ represented by a vector $\phi\in F_N$ such that $\phi\neq (0,\ldots,0)$ and $\phi\neq (1,\ldots,1)$. The latter conditions imply that the vector $\phi$ contains at least one $0$ at a position $i_0$ and one $1$ at a position $i_1$. Without any loss of generality we can assume that $i_0\in \overline{Q}$ and $i_1\in Q$. Let us then consider the following vector
\begin{equation}
v=e_{i_0}+e_{i_1}, 
\end{equation}
which consists of two ones on positions $i_0$ and $i_1$. We thus have $h(v,\phi)=1$, meaning that for $v$ the bipartition represented by $\phi$ is odd. Clearly, following the above reasoning, for any nontrivial bipartition $\phi$
such a vector can be constructed; let us denote it $v_{\phi}\in F_N$. In other words, for any nontrivial bipartition there exist a vector on which this bipartition is odd. 

Now, the fact that $\dim K(\mathbb{S})=N-1$ implies that the vectors $v_{i,j}$ constructed from the generators $G_i$ span $W_N$. Let us then select those $v_{i,j}$ that form a basis in $W_N$ and denote them $u_1,\ldots,u_{N-1}$; recall that in general there might be more vectors $v_{i,j}$ than $\dim W_N=N-1$. Clearly, any vector $v_{\phi}$ can be decomposed in that basis, that is,
\begin{equation}
    v_{\phi}=\sum_{i=1}^{N-1}\alpha_i u_i,
\end{equation}
where $\alpha_i\in\{0,1\}$ for all $i$. As mentioned above $h(v_{\phi},\phi)=1$, which due to the above decomposition implies that for some $i$ such that $\alpha_i\neq 0$,
\begin{equation}
h(\phi,u_i)=1.
\end{equation}
Consequently, any nontrivial bipartition $\phi$ is odd on at least one vector $v_{i,j}$ and thus it is odd on $K(\mathbb{S})$. Lemma \ref{Lem0} implies then that the subspace $V$ is genuinely entangled.
\end{proof}

Thus, this theorem tell us that a stabilizer subspace
is genuinely entangled if, and only if the corresponding vectors $v_{i,j}$ span $W_N$, or, equivalently, the subspace $K(\mathbb{S})$ spanned by them is isomorphic to $W_N$.

Let us illustrate the power of this criterion
by applying it to show that the examples of stabilizer states (graph states) and subspaces (5-qubit code) introduced in the preliminary section (Sec. \ref{prelim}) are indeed genuinely entangled.

\textit{Graph states.} First, we consider the graph states. 
Let us choose a particular graph $G=(V,E)$ 
and the corresponding graph state $\ket{\psi_G}$
stabilized by the stabilizing operators given in 
Eq. (\ref{StabGrSt}). Now, for a given pair  
$G_i$ and $G_j$ with $i\neq j$, the corresponding vector $v_{i,j}$ will have either two (if the vertices $i,j$ are connected) or no ones (if the they are unconnected). Let us then consider all vectors obtained from the stabilizing operators $G_i$ that have two ones and assume that they do not span $W_N$. This implies that there
exists a nontrivial even bipartition $Q|\overline{Q}$ of $v_{ij}$, i.e., all vectors $v_{ij}$ can be written as direct sums
\begin{equation}
    v_{i,j}=v_{i,j}^Q\oplus v_{i,j}^{\overline{Q}},
\end{equation}
where all $v_{i,j}^Q$ and $v_{i,j}^{\overline{Q}}$ have even number of ones or no ones. However, if such a bipartition exists it follows that the graph $G$ consists of two unconnected subgraphs, which contradicts that we consider only connected graphs. Thus, the graph state $\ket{\psi_G}$ is genuinely entangled. 

\textit{The five-qubit code.} Let us now apply our criterion to the five-qubit code; the generators
are given in Eq. (\ref{Stab5Qubit}). One finds that the vectors $v_{i,j}$
that are obtained from them are of the form
\begin{eqnarray}
v_{1,2}&=&(0,1,0,1,0),\qquad v_{1,3}=(0,0,1,1,0),\nonumber\\
v_{1,4}&=&(1,1,0,0,0),\qquad v_{2,3}=(0,0,1,0,1),\nonumber\\
v_{2,4}&=&(0,0,0,1,1),\qquad v_{3,4}=(1,0,0,1,0).
\end{eqnarray}
We thus have six vectors with, each having two ones. It is not difficult to see that $v_{1,2}$, $v_{1,3}$, $v_{1,4}$ and $v_{2,4}$ are linearly independent and therefore they span $W_4$, which implies the corresponding subspace to be GME.

\subsection{Construction of GME stabilizer subspaces for any number of qubits}
\label{SecIIIB}

Interestingly, the above formalism, in particular Theorem \ref{Crit2} not only serve as a efficient method of checking genuine entanglement of a stabilizer subspace, but can also be employed to construct GME stabilizer subspaces for any number of qubits. More precisely, given a basis spanning $W_N$ like the one in Eq. (\ref{vectorsW}) one can try to find stabilizing operators giving rise to this basis. 
In other words, any abelian stabilizer $\mathbb{S}=\langle G_1,\ldots,G_k\rangle$ that does not contain $-\mathbbm{1}^{\otimes N}$ and gives rise to these vectors, stabilizes a nontrivial genuinely entangled subspace. 

Let us illustrate the above idea with two examples of nontrivial stabilizers giving rise to vectors (\ref{vectorsW}). The first one corresponds to the well-known $N$-qubit GHZ state, whereas the second one gives rise to a class of subspaces whose dimension grown exponentially with $N$

\textit{Construction 1.} A very simple construction of generators that giving rise to the vectors (\ref{vectorsW}) is the following one
\begin{equation}
G_0=X_1\ldots X_N
\end{equation}
and
\begin{equation}
G_i=Z_iZ_{i+1}
\end{equation}
with $1=2,\ldots,N-1$. Indeed, a pair $G_0,G_i$ with $i=1,\ldots,N-1$
gives rise to the vector $w_i$. Moreover, the operators $G_i$ commute because each $G_i$ with $i=1,\ldots,N-1$ contains only two $Z$ operators. It is also fairly easy to see that one cannot obtain $-\mathbbm{1}^{\otimes N}$ by taking their products, and consequently, 
they give rise to a nontrivial stabilizer. Due to the fact that they are all independent, the dimension of the stabilized subspace is one, and it corresponds to the $N$-qubit GHZ state
\begin{equation}
    \ket{\mathrm{GHZ}_N}=\frac{1}{\sqrt{2}}(\ket{0}^{\otimes N}+\ket{1}^{\otimes N}).
\end{equation}

\textit{Construction 2.} Let us now present a bit less straightforward construction of a GME stabilizer subspace whose dimension is higher than one and grown with the number of parties $N$. For clarity we assume
$N$ to be even, however, a similar construction can also be given for odd $N$.

Let us begin with the simplest cases 
of $N=4$ and $N=6$. For $N=4$ we take the following operators
\begin{eqnarray}\label{Gexample0}
   \mathcal{G}_1&=&X_1X_2X_3X_4,\nonumber\\
    \mathcal{G}_2&=&Z_1Z_2X_3,\nonumber\\
    \mathcal{G}_3&=&X_2Z_3Z_4.
\end{eqnarray}
These operators commute and it is not possible to 
generate $-\mathbbm{1}_2^{\otimes 3}$ by taking their products because 
to obtain $-\mathbbm{1}_2^{\otimes 3}$ we need a product of at least four different generators, whereas here we have only three. Now, one finds that the vectors 
(\ref{vectors}) that can be associated to them are exactly
\begin{eqnarray}
v_{1,2}&=&(1,1,0,0),\nonumber\\
v_{2,3}&=&(0,1,1,0),\nonumber\\
v_{1,3}&=&(0,0,1,1),
\end{eqnarray}
that is, they span $W_3$, and therefore the two-dimensional subspace
$\mathcal{V}_4$ stabilized by (\ref{Gexample0}) is genuinely entangled. 

For $N=6$ we consider the following four operators
\begin{eqnarray}\label{Gexample1}
   \mathcal{G}_1 &=& X_1X_2X_3X_4X_5 X_6,\nonumber\\
\mathcal{G}_2 &=& Z_1Z_2X_3,\nonumber\\
    \mathcal{G}_3&=& X_2Z_3Z_4X_5,\nonumber\\
    \mathcal{G}_4&=& X_4Z_5Z_6.
\end{eqnarray}
Again, they mutually commute and do not generate $-\mathbbm{1}^{\otimes 6}_2$, and thus give rise to a stabilizer that stabilizes a nontrivial subspace $\mathcal{V}_6$ of dimension four. These generators give rise to six vectors $v_{i,j}$ among which five are of the form (\ref{vectors}), or, equivalently, $\dim \mathrm{span}\{v_{i,j}\}=5$, and thus the stabilized subspace $\mathcal{V}_6$ is GME. 

Given these two particular cases it is not difficult to construct the
generators $G_i$ for any even $N$ giving rise to the vectors (\ref{vectorsW}). Indeed, let us consider the following operators
\begin{eqnarray}\label{GenII}
\mathcal{G}_1&=&X_1X_2\ldots X_N,\nonumber\\
\mathcal{G}_2&=&Z_1Z_2X_3,\nonumber\\
\mathcal{G}_i&=&X_{2i-4}Z_{2i-3}Z_{2i-2}X_{2i-1}
\end{eqnarray}

with $i=3,\ldots,N/2$, and
\begin{eqnarray}\label{GenIII}
\mathcal{G}_{N/2+1}&=&X_{N-2}Z_{N-1}Z_N.
\end{eqnarray}
One realizes that they all mutually commute:
$\mathcal{G}_{1}$ commutes with any $\mathcal{G}_i$ for $i=2,\ldots,N/2+1$
because the latter consists of two $Z$ operators and one or two $X$ operators, whereas any pair of operators $\mathcal{G}_i$, $\mathcal{G}_j$ for $i,j=2,\ldots,N/2+1$ commutes because they overlap nontrivially on two positions, in which case the local Pauli operators anticommute on these two positions, or they do not have nontrivial overlap. Then, the subgroup generated by $\mathcal{G}_i$ does not contain $-\mathbbm{1}^{\otimes N}_2$ due to the fact that the largest 
overlap of any $k\geq 4$ different generators is three, while one needs
$XZXZ$ to create $-\mathbbm{1}_2$ at a single site.

Now, one finds that these generators are independent and there are $N/2+1$ of them, and therefore they stabilize a subspace $\mathcal{V}_N$ of dimension $\dim \mathcal{V}_{N}=2^{N-k}=2^{N-N/2-1}=2^{N/2-1}$. 
Moreover, this subspace is GME for any $N$ because the vectors 
$v_{i,j}$ these generators give rise to contain those listed in Eq.  (\ref{vectorsW}). Yet, $2^{N/2-1}$ is not the maximal dimension of a GME subspace that can be achieved within the stabilizer formalism, and, as we will see in the next section, larger subspaces can be constructed and we provide an example of such maximally-dimensional subspace. 

Let us conclude this section by noting that one can slightly modify 
the generators $\mathcal{G}_i$ for any even $N\geq 6$ to make them cyclic
\begin{eqnarray}\label{wzory}
\widetilde{\mathcal{G}}_1&=&X_1\ldots X_N\nonumber\\
\widetilde{\mathcal{G}}_i&=&X_{2i-4}Z_{2i-3}Z_{2i-2}X_{2i-1}
\end{eqnarray}
with $i=2,\ldots,N/2+1$, where we use the convention that $X_{0}\equiv X_N$.

\section{Genuinely entangled stabilizer subspaces of maximal dimension}

In the preceding section we presented a nontrivial example of a stabilizer GME subspace whose scales exponentially with the number of qubits $N$. Our aim here is to provide another example of 
a GME stabilizer subspace whose dimension for any $N$ is maximal
achievable within the stabilizer formalism.

Let us first find an upper bound on the 
maximal dimension of the GME stabilizer subspaces (see also \cite{Waegell}).
\begin{thm}
For any nontrivial stabilizer $\mathbb{S}\subset \mathbb{G}_N$ for which stabilizer subspace $V(\mathbb{S})$ is genuinely entangled, the dimension of $V(\mathbb{S})$ is bounded as
\begin{equation}\label{BoundDim}
    \dim V\leq 2^{N-k_{\min}(N)},
\end{equation}
where
\begin{equation}\label{Nmin}
    k_{\min}(N)=\left\lceil \frac{1+\sqrt{8N-7}}{2}\right\rceil.
\end{equation}
\end{thm}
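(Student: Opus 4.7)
The plan is to combine the dimension formula $\dim V(\mathbb{S}) = 2^{N-k}$, where $k$ is the number of independent generators of $\mathbb{S}$, with the GME criterion of Theorem \ref{Crit2}. To maximize $\dim V(\mathbb{S})$ we have to minimize $k$, so the task reduces to a lower bound on $k$ under the GME assumption.

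First I would observe that $k$ generators produce exactly $\binom{k}{2} = k(k-1)/2$ vectors $v_{i,j}$, and by definition $K(\mathbb{S})$ is their linear span inside $W_N$. Theorem \ref{Crit2} tells us that $V(\mathbb{S})$ is GME if and only if $\dim K(\mathbb{S}) = N-1 = \dim W_N$. Since $\dim K(\mathbb{S})$ cannot exceed the number of spanning vectors, the GME condition forces

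\begin{equation}
\frac{k(k-1)}{2} \;\geq\; N-1.
\end{equation}

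Next I would solve this quadratic inequality for $k$. Rewriting it as $k^2 - k - 2(N-1) \geq 0$ and taking the positive root gives

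\begin{equation}
k \;\geq\; \frac{1+\sqrt{1+8(N-1)}}{2} \;=\; \frac{1+\sqrt{8N-7}}{2}.
\end{equation}

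Since $k$ must be an integer, we conclude $k \geq \lceil (1+\sqrt{8N-7})/2 \rceil = k_{\min}(N)$. Substituting into the dimension formula yields the claimed bound $\dim V(\mathbb{S}) \leq 2^{N-k_{\min}(N)}$.

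There is essentially no hard step here: the only ingredients are the stabilizer dimension formula, the GME characterization already proved in Theorem \ref{Crit2}, and an elementary counting argument on $\binom{k}{2}$. The one thing to be careful about is that the inequality $\dim K(\mathbb{S}) \leq k(k-1)/2$ holds regardless of possible linear dependencies among the $v_{i,j}$, so the bound is valid even though many of the $v_{i,j}$ may coincide or be redundant. No additional structural assumption about $\mathbb{S}$ beyond being a nontrivial (abelian, not containing $-\mathbbm{1}_2^{\otimes N}$) stabilizer is needed.
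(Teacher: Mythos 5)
Your proposal is correct and follows essentially the same route as the paper: invoke Theorem \ref{Crit2} to require $\dim K(\mathbb{S})=N-1$, bound this by the number $k(k-1)/2$ of vectors $v_{i,j}$, and solve the resulting quadratic inequality for $k$ to get $k\geq k_{\min}(N)$ and hence $\dim V\leq 2^{N-k_{\min}(N)}$. The only difference is that you spell out the ceiling/quadratic step that the paper leaves implicit.
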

\begin{proof}
It follows from Theorem \ref{Crit2} that we need at least $N-1$ vectors
$v_{i,j}$ from $W_N$ for a stabilizer subspace to be GME. Given that 
the stabilizer $\mathbb{S}_N$ is generated by $k$ generators, there are
$k(k-1)/2$ vectors $v_{i,j}$. We thus have an inequality relating $k$ and $N$:
\begin{equation}
    N-1\leq \frac{1}{2}k(k-1)
\end{equation}
that must be satisfied in order for the subspace $V(\mathbb{S})$
to be GME. The smallest $k$ obeying this inequality is given by (\ref{Nmin}). 
\end{proof}

Let us now show that the bound (\ref{BoundDim}) is tight 
by providing a particular GME stabilizer subspace for which 
(\ref{BoundDim}) is saturated. To this end, to simplify the notation, let us first denote $k\equiv k_{\min}(N)$ and 
\begin{equation}\label{Pi}
P_{i}=\left\{ 
\begin{array}{ll}
X,& \textrm{ for odd\;} i, \\[1ex]
Z,& \textrm{ for even\;}i.
\end{array}
\right.
\end{equation}
Consider then the following operators
\begin{eqnarray}\label{eq_stab}
G_{i}^{\max}&=&\mathbbm{1}_2^{\otimes\frac{1}{2}(i-1)(i-2)}\otimes P_{i}^{\otimes(i+1)}\nonumber\\
&&\bigotimes_{q=i+2}^{q_{\max}(i)}\left[\mathbbm{1}_2^{\otimes(q-3)}\otimes P_{q+1}^{\otimes 2}\right]\otimes\mathbbm{1}_2^{\otimes[N-\gamma(i)]},\nonumber\\
G_{k-1}^{\max}&=&\mathbbm{1}_2^{\otimes\frac{1}{2}(k-2)(k-3)}\otimes P_{k-1}^{\otimes k}\otimes\mathbbm{1}_2^{\otimes\left[N-\frac{1}{2}(k-2)(k-3)-k\right]},\nonumber\\
G_{k}^{\max}&=&\mathbbm{1}_2^{\otimes\frac{1}{2}(k-1)(k-2)}\otimes P_{k}^{\otimes\left[N-\frac{1}{2}(k-1)(k-2)\right]},
\end{eqnarray}
where $i=1,\dots,k-2$ and $q_{\max}(i)\coloneqq k_{\min}(N+i)-1$ with $k_{\min}(N+i)$ being (\ref{Nmin})
for $N+i$ qubits, and
\begin{equation}
\gamma(i)=\frac{1}{2}\left[q_{\max}^{2}(i)-q_{\max}(i)-2i+4\right].
\end{equation}

We can now state the following fact, 
whose proof is quite technical and therefore it is deferred to Appendix \ref{AppA}. 
\begin{fakt}\label{fakt2}
The operators $G_i^{\max}$ defined in Eqs. (\ref{eq_stab}) mutually commute and do not generate $-\mathbbm{1}_2^{\otimes N}$, and therefore 
stabilize a nontrivial subspace $V_N^{\max}$ of dimension $\dim V_{N}^{\max}=2^{N-k_{\min}(N)}$.
\end{fakt}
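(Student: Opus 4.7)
The plan is to establish three properties of the operators $\{G_i^{\max}\}_{i=1}^{k}$ defined in (\ref{eq_stab}), where $k \equiv k_{\min}(N)$: (i) they mutually commute, (ii) no nonempty product of a subset yields $\mathbbm{1}_2^{\otimes N}$, and (iii) no nonempty product yields $-\mathbbm{1}_2^{\otimes N}$. Claims (i) and (iii) are the content of the fact, while (ii) provides independence of the generators. Once independence of $k$ generators is secured, the dimension formula $\dim V_N^{\max} = 2^{N-k}$ follows from the standard stabilizer counting already recalled in Sec.~\ref{prelim}.

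First I would unpack the support of each generator carefully. By inspection of (\ref{eq_stab}), for $i\leq k-2$ the operator $G_i^{\max}$ decomposes into a contiguous ``head'' $P_i^{\otimes(i+1)}$ starting at qubit $\frac{1}{2}(i-1)(i-2)+1$, followed by a ``tail'' consisting of doublets $P_{q+1}^{\otimes 2}$ placed at prescribed offsets, with $q$ ranging up to $q_{\max}(i)$. The triangular offsets $\frac{1}{2}(i-1)(i-2)$ are designed precisely so that the head of $G_{i+1}^{\max}$ is aligned right after the head of $G_i^{\max}$, and each doublet of $G_i^{\max}$ is positioned to intersect the head of exactly one later generator on two qubits. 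I would tabulate this support explicitly so that the overlap counting in the next step becomes routine.

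For (i), I would verify $[G_i^{\max},G_j^{\max}]=0$ for $i<j$ by counting anticommuting sites. Since $P_i$ and $P_j$ anticommute iff $i$ and $j$ have different parities, pairs with matching parity commute trivially. For opposite-parity pairs, the possible overlaps between the supports split into two kinds: (a) the head of $G_j^{\max}$ meets a doublet of $G_i^{\max}$, contributing exactly two anticommuting sites; (b) a doublet of $G_j^{\max}$ either sits inside the head of $G_i^{\max}$ or aligns with a doublet of $G_i^{\max}$, again contributing exactly two sites. In every case the number of anticommuting sites is even, so the operators commute. For claims (ii) and (iii), I would argue by a witness argument: the staircase layout ensures that the first $i+1$ qubits of the head of $G_i^{\max}$ that lie beyond the head of $G_{i-1}^{\max}$ are acted on by $P_i$ only by $G_i^{\max}$ itself, while all other generators act there either as identity or as $P_q$ of the same Pauli type. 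Reading off these exclusive positions recovers the indicator of each generator in any product, and so no nonempty product can equal $\pm\mathbbm{1}_2^{\otimes N}$.

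The main obstacle will be the bookkeeping in step (i): the positions are controlled by the nested expressions $k_{\min}(N+i)$ and $\gamma(i)$, and verifying the even-overlap property demands a case split based on whether $j$ falls in the head range or a doublet range of $G_i^{\max}$ and on the relative parities of $i,j$. Once this tabulation is made explicit the remaining algebra is mechanical; steps (ii) and (iii), and the final counting of independent generators, are straightforward consequences of the staircase structure.
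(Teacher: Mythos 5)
Your proposal is correct and follows essentially the same route as the paper's Appendix~\ref{AppA}: the paper's block decomposition $C_l$ of the qubits and the explicit tabulation of each $G_i^{(C_l)}$ is precisely the bookkeeping device for your ``head plus doublets'' picture, and it likewise concludes commutation by showing every opposite-parity pair anticommutes on exactly two sites, and rules out $-\mathbbm{1}_2^{\otimes N}$ by noting that the product $XZXZ$ cannot be assembled at any single site from distinct generators. The one point where you genuinely diverge is independence: you extract it from exclusive witness positions, which is exactly the ``unique matrix'' criterion the paper formulates in Appendix~\ref{App:Ind} for the enlarged set $\{G_i^{\max},H_{i,j}^{\max}\}$, whereas Appendix~\ref{AppA} instead deduces independence indirectly from the GME property combined with the dimension bound (\ref{BoundDim}); your direct route is arguably cleaner and also unifies the exclusion of $+\mathbbm{1}_2^{\otimes N}$ and $-\mathbbm{1}_2^{\otimes N}$ in one stroke. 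Two small repairs are needed for the plan to close: at a witness position the \emph{other} generators must act as identity or as the \emph{opposite} Pauli type (not ``the same Pauli type'' as you wrote), since otherwise a product of the others could reproduce the witness Pauli there; and your case split for commutation omits the overlap of the heads of two \emph{consecutive} generators $G_i^{\max}$ and $G_{i+1}^{\max}$, which is in fact where the two anticommuting sites come from for $j=i+1$ (it still contributes exactly two sites, so the conclusion is unaffected once the case is added).
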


It is interesting to compare the maximal dimension of GME stabilizer subspaces (\ref{BoundDim}), which for sufficiently large $N$ scales as $2^{N-\sqrt{2N}}$, with the maximal dimension of any GME subspace is $(\mathbbm{C}^2)^{\otimes N}$ given by the following simple formula \cite{Demianowicz},
\begin{equation}
    D_{\max}^{\mathrm{GES}}=2^{N-1}-1.
\end{equation}
Thus, as expected, the maximal dimension of any GME subspace is bigger for any $N$ than that of GME stabilizer subspaces, and, moreover, gap between them grows with $N$ as their ratio scales as $2^{\sqrt{2N}}$.

For illustrative purposes let us finally discuss 
our construction for two particular cases of 
$N=3$ and $N=4$ (for $N=2$ our construction reproduces stabilizing operators of the maximally entangled state of two qubits). 

For $N=3$, Eqs. (\ref{eq_stab}) give
\begin{equation}\label{N3}
G_1^{\max}=X_1X_2,\quad G_2^{\max}=Z_1Z_2Z_3,\quad G_3^{\max}=X_2X_3.
\end{equation}
It is direct to see that they are independent and generate a nontrivial stabilizer $\mathbb{S}_{3}^{\max}=\langle G^{\max}_1,G^{\max}_2,G^{\max}_3\rangle$, implying that 
$\mathbb{S}^{\max}_3$ stabilizes a one-dimensional subspace
spanned by
\begin{eqnarray}
\ket{\psi^{\max}_3}&=&\frac{1}{2}(\ket{000}+\ket{110}+\ket{011}+\ket{101})\nonumber\\
&=&\frac{1}{\sqrt{2}}(\ket{\psi_{+}}\ket{0}+\ket{\phi_{+}}\ket{1}),
\end{eqnarray}
where $\ket{\psi_{\pm}}=(\ket{00}\pm\ket{11})/\sqrt{2}$
and $\ket{\phi_{\pm}}=(\ket{01}\pm\ket{10})/\sqrt{2}$ is the two-qubit Bell basis. 

Moreover, the corresponding subspace $K(\mathbb{S}_{3}^{\max})$ is two-dimensional
as the vectors $v_{i,j}$ associated to the pairs $G_1^{\max},G_2^{\max}$ and $G_2^{\max},G_3^{\max}$ are $(1,1,0)$ and $(0,1,1)$, respectively. Consequently, $\ket{\psi^{\max}_3}$ is genuinely entangled.

Let us now move to a bit more complicated case of $N=4$. 
The stabilizing operators are given by 
\begin{equation}
    G_1=X_1X_2Z_3Z_4,\quad G_2=Z_1Z_2Z_3,\quad G_3=X_2X_3X_4
\end{equation}
Again, they mutually commute and one cannot generate $-\mathbbm{1}^{\otimes 4}_2$ by taking their products. 
Hence, they stabilize a two-dimensional subspace $V^{\max}_4$
spanned by the four-qubit states that can be written in the following way
\begin{equation}
\ket{\psi_1}=\frac{1}{2}(\ket{\psi_+}\ket{00}+\ket{\psi_-}\ket{01}+\ket{\phi_-}\ket{10}+\ket{\phi_+}\ket{11})
\end{equation}
and
\begin{equation}
\ket{\psi_2}=\frac{1}{2}(\ket{\psi_+}\ket{00}-\ket{\psi_-}\ket{01}-\ket{\phi_-}\ket{10}+\ket{\phi_+}\ket{11}).  
\end{equation}

One finally checks that the subspace $K(\mathbb{S}^{\max}_{4})$ associated to $G_i$ $(i=1,2,3)$ is three dimensional and therefore $V^{\max}_{4}$ is genuinely entangled.

\section{Bell inequalities maximally violated by stabilizer subspaces}

Interestingly, taking advantage of the stabilizer formalism
we can construct multipartite Bell inequalities that are maximally
violated by any state from the maximally-dimensional subspaces
introduced in the preceding section (see also Ref. \cite{Waegell,Koh} for earlier constructions of Bell inequalities maximally violated by entangled subspaces). Our construction 
builds on a general approach for constructing CHSH-like Bell inequalities within the stabilizer formalism introduced in Refs. \cite{graphstates,subspaces}. (In Appendix \ref{AppB} we also provide an analogous construction for Bell inequalities maximally violated by the subspaces 
given in Sec. \ref{SecIIIB}.)

Let us consider again the parties $P_i$ and denote by 
$A^{(i)}_{x_i}$ the observables measured by them on their shares
of some state $\rho$. The departure point of our construction of Bell inequalities are the stabilizing operators (\ref{eq_stab}). For pedagogical reasons we begin, however, with the simplest case $N=3$, for which they are stated explicitly in Eq. (\ref{N3}).

Now, we assign the observables $A_{x_i}^{(i)}$ or their combinations to all Pauli matrices $X$ and $Z$ appearing in 
$G_i^{\max}$ $(i=1,2,3)$.
Precisely, for the first observer we make the following
assignments
\begin{equation}\label{assign1}
X\to A_{0}^{(1)}+A_1^{(1)},\qquad
Z\to A_{0}^{(1)}-A_1^{(1)},
\end{equation}
whereas for the remaining two observers $P_i$ $(i=2,3)$:
\begin{equation}\label{assign2}
X\to A_{0}^{(i)}, \qquad Z\to A_{1}^{(i)}.
\end{equation}

Then, using these assignments, we associate an expectation value to each generator $G_i^{\max}$. Precisely, 
\begin{eqnarray}
    G_1^{\max}&\to& \langle (A_{0}^{(1)}+A_1^{(1)})A_0^{(2)}\rangle,\nonumber\\
    G_2^{\max}&\to& \langle (A_{0}^{(1)}-A_1^{(1)})A_1^{(2)}A_1^{(3)}\rangle
\end{eqnarray}
and
\begin{equation}
    G_3^{\max}\to \langle A_0^{(2)}A_0^{(3)}\rangle\nonumber\\
\end{equation}
Adding the obtained expectation values we finally arrive at the  following Bell inequality
\begin{eqnarray}\label{BellN3}
    I_{3}^{\max}&:=&\langle (A_{0}^{(1)}+A_1^{(1)})A_0^{(2)}\rangle+
    \langle (A_{0}^{(1)}-A_1^{(1)})A_1^{(2)}A_1^{(3)}\rangle\nonumber\\
    &&+\langle A_0^{(2)}A_0^{(3)}\rangle\leq 3,
\end{eqnarray}
whose classical bound $\beta_{C}^3=3$ was found by optimizing $I_3^{\max}$ over all local deterministic correlations [cf. Eq. (\ref{locdet}) and the text that follows].

Interestingly, the maximal quantum value of (\ref{BellN3}) can also be found analytically and is given by $\beta_Q^3=2\sqrt{2}+1>\beta_C^3$. Precisely, observing that for any choice of dichotomic observables $A_{x_i}^{(i)}$, 
the "shifted Bell operator" $(2\sqrt{2}+1)\mathbbm{1}-\mathcal{B}$ decomposes into the following sum of positive operators
\begin{eqnarray}
    (2\sqrt{2}+1)\mathbbm{1}-\mathcal{B}&=&\frac{1}{\sqrt{2}}
    \left[\mathbbm{1}-\frac{1}{\sqrt{2}}(A_{0}^{(1)}+A_1^{(1)})\right]^2\nonumber\\
    &&+\frac{1}{\sqrt{2}}
    \left[\mathbbm{1}-\frac{1}{\sqrt{2}}(A_{0}^{(1)}-A_1^{(1)})\right]^2\nonumber\\
    &&+\frac{1}{2}\left[\mathbbm{1}-A_0^{(2)}A_0^{(3)}\right]^2,
\end{eqnarray}
one infers that the maximal quantum value of $I_{3}^{\max}$ is upper bounded as $\beta_Q^3\leq 2\sqrt{2}+1$. To show that this is in fact an equality it suffices to notice that for the following choice of observables 
\begin{equation}\label{ChoiceN3}
A_0^{(1)}=\frac{1}{\sqrt{2}}(X+Z),\qquad 
A_1^{(1)}=\frac{1}{\sqrt{2}}(X-Z),\qquad 
\end{equation}
for the first party and $A_0^{(i)}=X$ and $A_1^{(i)}=Z$
for $i=2,3$, the Bell operator becomes a weighted sum of
the generators, that is, $\mathcal{B}=\sqrt{2}\,(G_1^{\max}+G_2^{\max})+G_3^{\max}$.
Its expectation value in the state $\ket{\psi_{3}^{\max}}$ stabilized by $G_i^{\max}$ is exactly $2\sqrt{2}+1$, which implies that $\beta_Q^3=2\sqrt{2}+1$.

\subsection{The construction for arbitrary $N$}
\label{ConstructionN}

The above construction generalizes directly to arbitrary number of observers $N$. Let us go back to the stabilizing operators (\ref{eq_stab}) and follow
the same steps as above. That is, for the first observer we make the assignments (\ref{assign1}), whereas
for the remaining ones we substitute 
\begin{equation}\label{assign22}
    X\to A_0^{(i)},\qquad Z\to A_1^{(i)}
\end{equation}
for $i=2,\ldots,N$. We then associate 
expectation values to all $G_i^{\max}$ in which the Pauli matrices
are replaced by the observables or their combinations
according to these assignments. Noting that $G_1$ and $G_2$ are the only generators that contain nontrivial matrices $X$ and $Z$ at the first site, we have
\begin{eqnarray}\label{G1max}
G_1^{\max}&\to& \left\langle(A_0^{(1)}+A_1^{(1)})A_0^{(2)}\right.
\nonumber\\
&&\left.\times\prod_{q=1}^{q_{\max}(1)-2}A_{q}^{((q+1)(q+2)/2)}A_{q}^{((q+1)(q+2)/2+1)}\right\rangle\nonumber\\
\end{eqnarray}
and
\begin{eqnarray}\label{G2max}
\hspace{-1cm}G_2^{\max}&\to& \left\langle(A_0^{(1)}-A_1^{(1)})A_1^{(2)}A_1^{(3)}
\right.
\nonumber\\
&&\left.\times\prod_{q=2}^{q_{\max}(2)-3}A_{q}^{((q+1)(q+2)/2-1)}A_{q}^{((q+1)(q+2)/2)}\right\rangle,\nonumber\\
\end{eqnarray}
where here and below we use the convention  
that $A_{q}^{(i)}$ stands for $A_0^{(i)}$ or $A_1^{(i)}$ 
if $q$ is even or odd, respectively. 
For the other generators we have
\begin{eqnarray}
G_{i}^{\max}&\to& \left\langle \prod_{j=1}^{i+1}A_{i+1}^{((i-1)(i-2)/2+j)}
\right.
\nonumber\\
&&\left.\times\prod_{q=i}^{q_{\max}(i)-(i+1)}A_{q}^{((q+1)(q+2)/2-i+1)}A_{q}^{((q+1)(q+2)/2-i+2)}\right\rangle,\nonumber\\
\end{eqnarray}
Finally, for the last two generators we have
\begin{equation}
G_{k-1}^{\max}\to \left\langle A_{k}^{((k-2)(k-3)/2+1)}\ldots A_{k}^{((k-2)(k-3)/2+k)}\right\rangle\nonumber\\
\end{equation}
and
\begin{equation}
G_{k}^{\max}\to \left\langle A_{k+1}^{((k-1)(k-2)/2+1)}\ldots A_{k+1}^{(N)}\right\rangle.
\end{equation}
\begin{widetext}

By adding the above expectation values we arrive at the
following Bell expression
\begin{eqnarray}\label{NierMax}
I_{\max}^N:&=&\left\langle(A_0^{(1)}+A_1^{(1)})A_0^{(2)}
\prod_{q=1}^{q_{\max}(1)-2}A_{q}^{((q+1)(q+2)/2)}A_{q}^{((q+1)(q+2)/2+1)}\right\rangle\nonumber\\
&&+\left\langle(A_0^{(1)}-A_1^{(1)})A_1^{(2)}A_1^{(3)}
\prod_{q=2}^{q_{\max}(2)-3}A_{q}^{((q+1)(q+2)/2-1)}A_{q}^{((q+1)(q+2)/2)}\right\rangle\nonumber\\
&&+\sum_{i=3}^{k-2}\left\langle \prod_{j=1}^{i+1}A_{i+1}^{((i-1)(i-2)/2+j)}\prod_{q=i}^{q_{\max}(i)-(i+1)}A_{q}^{((q+1)(q+2)/2-i+1)}A_{q}^{((q+1)(q+2)/2-i+2)}\right\rangle\nonumber\\
&&+\left\langle A_{k}^{((k-2)(k-3)/2+1)}\ldots A_{k}^{((k-2)(k-3)/2+k)}\right\rangle+\left\langle A_{k+1}^{((k-1)(k-2)/2+1)}\ldots A_{k+1}^{(N)}\right\rangle
\leq k_{\min}(N).
\end{eqnarray}
\end{widetext}
The maximal classical value of these inequalities has been found by noting that for the local deterministic correlations, that is, those for which all expectation values are product
$\langle A_{x_{i_1}}^{(i_1)} \ldots A_{x_{i_k}}^{(i_k)}\rangle =\langle A_{x_{i_1}}^{(i_1)}\rangle\ldots \langle A_{x_{i_k}}^{(i_k)}\rangle$ and $\langle A_{x_{i_k}}^{(i_k)}\rangle=\pm1 $, all expectation values
in the third and the fourth line of (\ref{NierMax})
can take the maximal value one, whereas only one of the first two terms can be nonzero and takes the maximal value two.

Remarkably, the maximal quantum value of $I_N^{\max}$ can 
also be found analytically for any $N$. 
\begin{fakt}
The maximal quantum value of $I_N^{\max}$ is 
$\beta_Q^N=2(\sqrt{2}-1)+k_{\min}(N)$ and it is achieved by 
any pure state from $V_N^{\max}$. 
\end{fakt}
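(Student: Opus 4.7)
The plan is to mirror the $N=3$ analysis above: derive an SOS decomposition of $\beta_{Q}^{N}\mathbbm{1}-\mathcal{B}$ to upper bound $\langle \mathcal{B}\rangle$, and then saturate this bound by plugging the canonical Pauli observables into any state from $V_{N}^{\max}$. The key structural observation is that in the construction (\ref{eq_stab}) only $G_{1}^{\max}$ and $G_{2}^{\max}$ carry a nontrivial Pauli on site $1$ (namely $X$ and $Z$, respectively), so the assignments (\ref{assign1}) and (\ref{assign22}) rewrite (\ref{NierMax}) as $\mathcal{B}=\mathcal{T}_{1}+\mathcal{T}_{2}+\sum_{i=3}^{k_{\min}(N)}\mathcal{T}_{i}$, where $\mathcal{T}_{1}=(A_{0}^{(1)}+A_{1}^{(1)})M_{1}$, $\mathcal{T}_{2}=(A_{0}^{(1)}-A_{1}^{(1)})M_{2}$, and each $\mathcal{T}_{i}$ with $i\geq 3$ as well as $M_{1}$ and $M_{2}$ are tensor products of dichotomic observables on sites $j\geq 2$. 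Hence $M_{1}^{2}=M_{2}^{2}=\mathbbm{1}$ and $\mathcal{T}_{i}^{2}=\mathbbm{1}$ for $i\geq 3$, whereas $\mathcal{T}_{1,2}^{2}=2\mathbbm{1}\pm\{A_{0}^{(1)},A_{1}^{(1)}\}$.

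\textbf{Upper bound via SOS.} I would then consider the manifestly positive operator
\begin{align*}
S &= \tfrac{1}{\sqrt{2}}\bigl[\mathbbm{1}-\tfrac{1}{\sqrt{2}}\mathcal{T}_{1}\bigr]^{2}+\tfrac{1}{\sqrt{2}}\bigl[\mathbbm{1}-\tfrac{1}{\sqrt{2}}\mathcal{T}_{2}\bigr]^{2} \\
&\quad +\sum_{i=3}^{k_{\min}(N)}\tfrac{1}{2}\bigl[\mathbbm{1}-\mathcal{T}_{i}\bigr]^{2}.
\end{align*}
Expanding each square using the $\mathcal{T}_{i}^{2}$ identities above, the anticommutator contributions $\pm\{A_{0}^{(1)},A_{1}^{(1)}\}/(2\sqrt{2})$ coming from the first two terms cancel, each of the first two squares contributes $\sqrt{2}\mathbbm{1}$ to the constant, and each of the remaining squares contributes $\mathbbm{1}$, yielding $S=[2(\sqrt{2}-1)+k_{\min}(N)]\mathbbm{1}-\mathcal{B}$. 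Since $S\geq 0$, we obtain $\langle \mathcal{B}\rangle\leq 2(\sqrt{2}-1)+k_{\min}(N)$ on every quantum behaviour.

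\textbf{Saturation.} Take $A_{0}^{(1)}=(\sx+\sz)/\sqrt{2}$, $A_{1}^{(1)}=(\sx-\sz)/\sqrt{2}$, and $A_{0}^{(i)}=\sx$, $A_{1}^{(i)}=\sz$ for $i\geq 2$. Then $A_{0}^{(1)}+A_{1}^{(1)}=\sqrt{2}\sx$ and $A_{0}^{(1)}-A_{1}^{(1)}=\sqrt{2}\sz$, so $\mathcal{T}_{1}=\sqrt{2}\,G_{1}^{\max}$, $\mathcal{T}_{2}=\sqrt{2}\,G_{2}^{\max}$, and $\mathcal{T}_{i}=G_{i}^{\max}$ for $i\geq 3$. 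Consequently $\mathcal{B}=\sqrt{2}(G_{1}^{\max}+G_{2}^{\max})+\sum_{i=3}^{k_{\min}(N)}G_{i}^{\max}$, and any $\ket{\psi}\in V_{N}^{\max}$, being a joint $+1$-eigenvector of every $G_{i}^{\max}$ by Fact~\ref{fakt2}, gives $\bracket{\psi}{\mathcal{B}}{\psi}=2\sqrt{2}+(k_{\min}(N)-2)=2(\sqrt{2}-1)+k_{\min}(N)$, matching the upper bound and, notably, being independent of the choice of $\ket{\psi}\in V_{N}^{\max}$.

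\textbf{Main obstacle.} The only delicate point is the cancellation of $\{A_{0}^{(1)},A_{1}^{(1)}\}$ inside $S$; it is engineered by the symmetric weights $1/\sqrt{2}$ on the first two squares and the opposite signs in $\mathcal{T}_{1}$, $\mathcal{T}_{2}$. Everything else is routine bookkeeping that relies on two structural features of (\ref{eq_stab}): only $G_{1}^{\max}$ and $G_{2}^{\max}$ act nontrivially on the first site, and under the canonical Pauli substitution each $G_{i}^{\max}$ reappears, up to a $\sqrt{2}$ on site $1$, inside its associated $\mathcal{T}_{i}$.
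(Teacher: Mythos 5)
Your proposal is correct and follows essentially the same route as the paper: the identical sum-of-squares decomposition (your $\tfrac{1}{\sqrt{2}}\mathcal{T}_{1,2}$ are exactly the paper's $\tilde{G}_{1,2}$) for the upper bound, followed by saturation with the canonical Pauli observables turning $\mathcal{B}$ into $\sqrt{2}(G_{1}^{\max}+G_{2}^{\max})+\sum_{i\geq 3}G_{i}^{\max}$. The only difference is that you spell out the expansion and the cancellation of $\{A_{0}^{(1)},A_{1}^{(1)}\}$ explicitly, which the paper leaves as a "direct check".
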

\begin{proof}
The proof goes along the same lines as that for $N=3$. 
First, one finds by a direct check that the following decomposition
into a sum of positive operators is true, 
\begin{eqnarray}\label{eq_sumofsquares}
&&\hspace{-0.3cm}[2(\sqrt{2}-1)+k_{\min}(N)]\mathbbm{1}_2-\mathcal{B}_{N}
\nonumber\\
&&=\frac{1}{\sqrt{2}}(\mathbbm{1}_2-\tilde{G}_{1})^{2}\nonumber+\frac{1}{\sqrt{2}}(\mathbbm{1}_2-\tilde{G}_{2})^{2}
+\frac{1}{2}\sum_{i=3}^{k}(\mathbbm{1}_2-\tilde{G}_{i})^{2},
\end{eqnarray}
where $\mathcal{B}_N$ is a Bell operator associated to the 
Bell expression $I_N^{\max}$ with arbitrary observables $A_{x_i}^{(i)}$, and $\tilde{G}_{i}$ are constructed from 
the generators $G_i^{\max}$ by replacing the $X$ and $Z$ operators
by the observables $A_{x_i}^{(i)}$. This decomposition implies that $\beta_Q^N\leq 2(\sqrt{2}-1)+k_{\min}(N)$. 

In full analogy to the $N=3$ case we then let the parties measure the following observables:
for the first party we take those defined in Eq. (\ref{ChoiceN3}), 
whereas for the remaining ones $A_0^{(i)}=X$ and $A_1^{(i)}=Z$
with $i=2,\ldots,N$. For this particular choice the Bell operator becomes
\begin{equation}
    \mathcal{B}_N=\sqrt{2}(G_1^{\max}+G_2^{\max})+\sum_{i=3}^{k_{\min}(N)}G_i^{\max},
\end{equation}
and so it is a simple combination of the stabilizing operators $G_i^{\max}$ that stabilize the subspace $V_N^{\max}$. Consequently, 
for any pure state $\ket{\psi}\in V_N^{\max}$, one has 
$\langle\psi|\mathcal{B}_N|\psi\rangle=2(\sqrt{2}-1)+k_{\min}(N)$, 
which is what we wanted to achieve.
\end{proof}
Two comments are in order. First, we stress that the inequalities (\ref{NierMax}) are nontrivial for any number of parties as $\beta_Q>\beta_C$ for any $N\geq 2$. Second, it is worth pointing out that they are not only maximally violated by any pure state from $V_N^{\max}$ but also by any mixed state supported on $V_N^{\max}$.

In the next section we demonstrate that 
they can be used to self-test those maximally-dimensional subspaces.

\section{Self-testing the stabilizer subspaces of maximal dimension}

Let us first recall the definition of subspace self-testing introduced in Ref. \cite{subspaces}. Consider an entangled subspace $V$ of the $N$-qubit Hilbert space $\mathcal{H}_{P'}=(\mathbbm{C}^2)^{\otimes N}$ spanned by 
some orthogonal vectors $\ket{\psi_i}\in\mathcal{H}_{P'}$. Imagine then that the $N$ parties perform a Bell experiment on some $N$-partite (generally mixed) state acting on $\mathcal{H}_P$ whose dimension we assume to be finite but unknown and observe correlations $\mathcal{P}$.
Denoting $\ket{\phi_{PE}}\in \mathcal{H}_{PE}$ a purification of 
$\rho_P$ to a larger Hilbert space $\mathcal{H}_{PE}=\mathcal{H}_P\otimes\mathcal{H}_E$, we have the following definition \cite{subspaces}.
\begin{defin}
We say that the correlation $\mathcal{P}$ self-test the 
entangled subspace $V$ if for any pure state $\ket{\phi_{PE}}$
compatible with $\mathcal{P}$ one can deduce that:
(i) every local Hilbert space $\mathcal{H}_{P_i}=\mathbbm{C}^2\otimes \mathcal{H}_{P_i}''$
and (ii) there exist a local unitary operation $U_P=U_1\otimes\ldots\otimes U_N$ acting on $\mathcal{H}_P$ such that 
\begin{equation}
    (U_P\otimes\mathbbm{1}_E)\ket{\phi}_{PE}=\sum_{i}\alpha_i\ket{\psi_i}_{P'}\otimes\ket{\xi_i}_{P''E},
\end{equation}
where $\alpha_i\in\mathbb{R}$ are some nonnegative coefficients such that
$\sum_{i}\alpha_i^2=1$, whereas $\ket{\xi_i}_{P''E}$ are pure states from $\mathcal{H}_{P''E}$. 
\end{defin}

Let us now prove that the maximal quantum violation of 
the Bell inequalities introduced in the preceding section
self-tests the subspaces $V_{N}^{\max}$ for any $N$.
\begin{thm}
Any correlations maximally violating the Bell inequality 
(\ref{NierMax}) for a given $N$ self-tests the
subspace $V_{N}^{\max}$.
\end{thm}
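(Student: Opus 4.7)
The plan is to adapt the sum-of-squares (SOS) self-testing strategy of Refs.~\cite{subspaces,graphstates} to the subspace $V_N^{\max}$, using (\ref{eq_sumofsquares}) as the starting point. First, since every summand on the right-hand side of (\ref{eq_sumofsquares}) is a positive operator, maximal violation of $I_N^{\max}$ on a compatible pure state $|\phi\rangle_{PE}$ forces each summand to annihilate it, yielding the shifted stabilizer relations
\begin{equation}
\tilde{G}_i|\phi\rangle_{PE}=|\phi\rangle_{PE},\qquad i=1,\ldots,k_{\min}(N),
\end{equation}
where $\tilde{G}_1,\tilde{G}_2$ carry the local factors $(A_0^{(1)}+A_1^{(1)})/\sqrt{2}$ and $(A_0^{(1)}-A_1^{(1)})/\sqrt{2}$ at site $1$ and the observables $A_0^{(j)},A_1^{(j)}$ at sites $j\geq 2$ according to (\ref{assign22}), while $\tilde{G}_i$ for $i\geq 3$ are obtained purely by the substitution (\ref{assign22}).

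The next step is to extract the canonical observable relations required for the isometry. At site $1$, squaring $\tilde{G}_1$ gives $\tilde{G}_1^{2}=\mathbbm{1}+\frac{1}{2}\{A_0^{(1)},A_1^{(1)}\}$ (since all site-$j\geq 2$ factors square to $\mathbbm{1}$), and $\tilde{G}_1^{2}|\phi\rangle_{PE}=|\phi\rangle_{PE}$ immediately gives $\{A_0^{(1)},A_1^{(1)}\}|\phi\rangle_{PE}=0$; equivalently, the operators $X^{(1)}=(A_0^{(1)}+A_1^{(1)})/\sqrt{2}$ and $Z^{(1)}=(A_0^{(1)}-A_1^{(1)})/\sqrt{2}$ anticommute identically and square to $\mathbbm{1}$ on the support of the reduced state at party $1$. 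For each site $s\geq 2$, Theorem \ref{Crit2} guarantees that the vectors $v_{i,j}$ associated with the generators (\ref{eq_stab}) span $W_N$, and in particular both $X$ and $Z$ Paulis appear at position $s$ across $\{G_i^{\max}\}$. Following the CHSH-like manipulations of Ref.~\cite{subspaces}, one then combines suitable products of the shifted stabilizers (with the $1/\sqrt{2}$ prefactors from site $1$ controlled by the anticommutation just established) to obtain $\{A_0^{(s)},A_1^{(s)}\}|\phi\rangle_{PE}=0$ at every $s\geq 2$.

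Finally, with anticommutation and unitarity on support for the effective Pauli pairs at every site, I would invoke the local Swap isometry $U_P=U_1\otimes\cdots\otimes U_N$ of Ref.~\cite{subspaces}: each $U_i$ uses $X^{(i)},Z^{(i)}$ as canonical Pauli operators on an extracted qubit factor of $\mathcal{H}_{P_i}$ (with $X^{(i)}=A_0^{(i)},Z^{(i)}=A_1^{(i)}$ for $i\geq 2$ and the $1/\sqrt{2}$-rescaled combinations for $i=1$), so that under $U_P$ each $\tilde{G}_i$ is mapped to $G_i^{\max}\otimes\mathbbm{1}_{P''E}$ on the qubit subsystem. Since $\tilde{G}_i|\phi\rangle_{PE}=|\phi\rangle_{PE}$ for every $i$, the image $(U_P\otimes\mathbbm{1}_E)|\phi\rangle_{PE}$ is stabilized on the qubit factor by all $G_i^{\max}$ and therefore lies in $V_N^{\max}\otimes\mathcal{H}_{P''E}$; expanding it in an orthonormal basis $\{|\psi_i\rangle\}$ of $V_N^{\max}$ yields the decomposition demanded by the subspace self-testing definition. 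The main obstacle I anticipate is the anticommutation step at sites $s\geq 2$: one has to verify, uniformly in $N$ and from the intricate combinatorics of (\ref{eq_stab}), that products of the shifted stabilizers can always be arranged to isolate $\{A_0^{(s)},A_1^{(s)}\}|\phi\rangle_{PE}$ cleanly; the subtlety is ensuring that the $1/\sqrt{2}$ factors accumulating from site $1$ are absorbed by the site-$1$ anticommutation established in the previous paragraph.
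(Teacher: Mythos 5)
Your overall architecture --- the SOS decomposition (\ref{eq_sumofsquares}) forcing $\tilde{G}_i\ket{\phi}=\ket{\phi}$, extraction of anticommuting involutions at each site, the qubit-extraction lemma, and the final local isometry mapping the state into $V_N^{\max}\otimes\mathcal{H}_{P''E}$ --- is exactly the route the paper takes, and your treatment of site $1$ (the combinations $(A_0^{(1)}\pm A_1^{(1)})/\sqrt{2}$ anticommute identically, and they square to the identity on the support because of the stabilizer relations for $\tilde{G}_1,\tilde{G}_2$) matches the paper's. The genuine gap is precisely the step you flag at the end: establishing $\{A_0^{(s)},A_1^{(s)}\}\ket{\phi}=0$ for all $s\ge 2$. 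The property you invoke --- that both $X$ and $Z$ appear at position $s$ among the $G_i^{\max}$, which does follow from the $v_{i,j}$ spanning $W_N$ --- is not by itself sufficient to isolate the anticommutator at site $s$. If the only pair of generators anticommuting locally at $s$ did so on four or more positions, or on two positions neither of which already carries an established anticommutation relation, the CHSH-like cancellation would not close, so "follow the manipulations of Ref.~\cite{subspaces}" cannot be cited uniformly in $N$ without a structural input about the specific family (\ref{eq_stab}).

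The paper supplies exactly that input, proved in Appendix \ref{AppA}: for every pair of \emph{adjacent} qubits $s,s+1$ there exist generators $G_p^{\max},G_q^{\max}$ whose local Paulis anticommute exactly on $\{s,s+1\}$ and commute everywhere else, i.e.\ $v_{p,q}=e_s+e_{s+1}$, so all the basis vectors (\ref{vectorsW}) of $W_N$ are realized. Writing $\tilde{G}_p=\tilde{P}_a^{(s)}\tilde{P}_b^{(s+1)}\tilde{G}_p^{(Q_{s,s+1})}$ and $\tilde{G}_q=\tilde{P}_{a+1}^{(s)}\tilde{P}_{b+1}^{(s+1)}\tilde{G}_q^{(Q_{s,s+1})}$ with $\bigl[\tilde{G}_p^{(Q_{s,s+1})},\tilde{G}_q^{(Q_{s,s+1})}\bigr]=0$, one obtains the transfer identity $\{\tilde{P}_b^{(s+1)},\tilde{P}_{b+1}^{(s+1)}\}\ket{\phi}=\{\tilde{P}_a^{(s)},\tilde{P}_{a+1}^{(s)}\}\tilde{G}_p^{(Q_{s,s+1})}\tilde{G}_q^{(Q_{s,s+1})}\ket{\phi}$, and anticommutation then propagates by induction along the chain $1\to 2\to\cdots\to N$, starting from site $1$ where it holds identically. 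To complete your proof you would need to state and verify this adjacent-pair property (or an equivalent combinatorial statement about the supports of the $v_{i,j}$) for the generators (\ref{eq_stab}); the spanning condition of Theorem \ref{Crit2} alone does not deliver it.
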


Before proving this theorem we recall a well-known fact that will be highly useful for our proof. 
\begin{lem}\label{lem_selftest}
\cite{PR,Jed5} Let $\tilde{X}$ and $\tilde{Z}$ be Hermitian operators acting on $\mathcal{H}$ such that $\dim\mathcal{H}=d<\infty$. If the following conditions are met
\begin{enumerate}
    \item $\tilde{X}^{2}=\tilde{Z}^{2}=\mathbbm{1}_{\mathcal{H}}$,
    \item $\{\tilde{X},\tilde{Z}\}=0$,
\end{enumerate}
then we have $\mathcal{H}=\mathbbm{C}^{2}\otimes\mathbbm{C}^{d'}$, or, in other words, $d=2d'$ for some positive integer $d'$, and there exists a local unitary operation $U$ such that
\begin{equation*}
U\tilde{X}U^{\dagger} = X\otimes \mathbbm{1}_{d'},\quad U\tilde{Z}U^{\dagger} = Z\otimes \mathbbm{1}_{d'},
\end{equation*}
where $X$ and $Z$ are the Pauli matrices.
\end{lem}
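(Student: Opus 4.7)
\textbf{Proof proposal for Lemma \ref{lem_selftest}.} The plan is to exploit the spectral decomposition of $\tZ$ and use the anticommutation $\{\tX,\tZ\}=0$ together with $\tX^{2}=\mathbbm{1}_{\mathcal{H}}$ to show that $\tX$ acts as a unitary bijection between the two $\tZ$-eigenspaces. This will simultaneously force $d$ to be even and deliver the explicit tensor-product structure, with the $\mathbbm{C}^{d'}$ factor playing the role of a multiplicity label for the two-dimensional Pauli block.

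First, since $\tZ$ is Hermitian and squares to the identity, its only eigenvalues are $\pm 1$, so I orthogonally decompose $\mathcal{H}=\mathcal{H}_{+}\oplus\mathcal{H}_{-}$ into the corresponding eigenspaces. For any $\ket{\psi}\in\mathcal{H}_{+}$, the anticommutation gives $\tZ\tX\ket{\psi}=-\tX\tZ\ket{\psi}=-\tX\ket{\psi}$, so $\tX\ket{\psi}\in\mathcal{H}_{-}$; by symmetry $\tX$ also sends $\mathcal{H}_{-}$ into $\mathcal{H}_{+}$. Moreover, Hermiticity together with $\tX^{2}=\mathbbm{1}_{\mathcal{H}}$ makes $\tX$ unitary and self-inverse, so its restriction $\tX|_{\mathcal{H}_{+}}:\mathcal{H}_{+}\to\mathcal{H}_{-}$ is a bijective isometry. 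I therefore conclude that $\dim\mathcal{H}_{+}=\dim\mathcal{H}_{-}$; denoting this common dimension by $d'$, I obtain $d=2d'$ and the tensor factorization $\mathcal{H}=\mathbbm{C}^{2}\otimes\mathbbm{C}^{d'}$ at the dimensional level.

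To make the isomorphism concrete, I pick an orthonormal basis $\{\ket{e_{i}}\}_{i=1}^{d'}$ of $\mathcal{H}_{+}$ and set $\ket{f_{i}}=\tX\ket{e_{i}}$; by the previous paragraph, $\{\ket{f_{i}}\}_{i=1}^{d'}$ is an orthonormal basis of $\mathcal{H}_{-}$, and together the two bases span $\mathcal{H}$. Defining the unitary $U:\mathcal{H}\to\mathbbm{C}^{2}\otimes\mathbbm{C}^{d'}$ by $U\ket{e_{i}}=\ket{0}\otimes\ket{i}$ and $U\ket{f_{i}}=\ket{1}\otimes\ket{i}$, a one-line check using $\tZ\ket{e_{i}}=\ket{e_{i}}$, $\tZ\ket{f_{i}}=-\ket{f_{i}}$, $\tX\ket{e_{i}}=\ket{f_{i}}$, and $\tX\ket{f_{i}}=\tX^{2}\ket{e_{i}}=\ket{e_{i}}$ yields $U\tZ U^{\dagger}=Z\otimes\mathbbm{1}_{d'}$ and $U\tX U^{\dagger}=X\otimes\mathbbm{1}_{d'}$, as required.

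This argument presents no essential obstacle: the whole content is that two Hermitian involutions that anticommute generate a representation of the Pauli algebra, and in finite dimensions every such representation decomposes into copies of the defining two-dimensional one. The only mild subtlety is the bookkeeping needed to see that $\tX|_{\mathcal{H}_{+}}$ is unitary \emph{onto} $\mathcal{H}_{-}$ rather than merely an isometry \emph{into} it, which follows at once from $\tX$ being its own inverse and hence having trivial kernel on each summand.
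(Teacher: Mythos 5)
Your proof is correct and complete: the eigenspace decomposition of $\tZ$, the observation that anticommutation forces $\tX$ to swap $\mathcal{H}_{+}$ and $\mathcal{H}_{-}$ bijectively (hence $d=2d'$), and the explicit basis $\{\ket{e_i}, \tX\ket{e_i}\}$ realizing the Pauli block structure together constitute the standard argument. Note that the paper itself does not prove this lemma — it imports it as a known fact from the cited references [PR, Jed5] — so there is no in-text proof to compare against; your argument is a faithful, self-contained reconstruction of the classical one, and the only point worth being careful about (surjectivity of $\tX|_{\mathcal{H}_{+}}$ onto $\mathcal{H}_{-}$, needed to equate the two dimensions) is one you explicitly address via $\tX^{2}=\mathbbm{1}_{\mathcal{H}}$.
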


\begin{proof}
Let us denote by $A_{x_i}^{(i)}$ with $x_i=0,1$ and $i=1,\ldots,N$,
and $\ket{\psi}\in \mathcal{H}_{PE}$ denote quantum observables
and quantum state achieving the maximal quantum value of
$I_N^{\max}$. Without any loss of generality we can assume
that the reduced density matrices $\rho_{P_i}$ are of full rank.

From the sum-of-squares decomposition we infer that 
\begin{equation}\label{conds}
    \tilde{G}_i\ket{\tilde{\phi}}=\ket{\tilde{\phi}},
\end{equation}
where we recall that $\tilde{G}_i$ are quantum operators acting on $\mathcal{H}_P$ constructed from the generators
$G_i$ by making the substitutions (\ref{assign1}) and (\ref{assign22}); for clarity in Eq. (\ref{conds}) and in what follows we omit the identity acting on $E$. To simplify the mathematical considerations let us denote 
\begin{equation}\label{eq_substit_real}
\tilde{X}^{(1)} =\frac{A^{(1)}_{0}+A^{(1)}_{1}}{\sqrt{2}},\qquad \tilde{Z}^{(1)} =\frac{A^{(1)}_{0}-A^{(1)}_{1}}{\sqrt{2}}
\end{equation}
and
\begin{equation}
\tilde{X}^{(i)} = A^{(i)}_{0},\qquad \tilde{Z}^{(i)}=A^{(i)}_{1}.
\end{equation}

Let us now show that all operators $\tilde{X}^{(i)}$ and $\tilde{Z}^{(i)}$ satisfy both conditions of 
Lemma \ref{lem_selftest}. We begin with the first condition.
Due to the fact that $\tilde{X}^{(i)}$ and $\tilde{Z}^{(i)}$
for $i=2,\ldots,N$ are Hermitian operators with eigenvalues $\pm1$, we directly see
that 
\begin{equation}
\left[\tilde{X}^{(i)}\right]^2=\left[\tilde{Z}^{(i)}\right]^2=\mathbbm{1}
\end{equation}
To prove that the same holds for 
$\tilde{X}^{(1)}$ and $\tilde{Z}^{(1)}$,
let us consider (\ref{conds}) for $i=1$ which 
can be stated as
\begin{equation*}
\tilde{X}^{(1)}\otimes\tilde{G}^{(Q_{1})}_{1}\ket{\tilde{\phi}}=\ket{\tilde{\phi}},
\end{equation*}
where $\tilde{G}^{(Q_{1})}_{1}$ denotes all elements of 
$\tilde{G}^{(1)}$ apart from $\tilde{X}^{(1)}$. By applying
$\tilde{X}^{(1)}\otimes\tilde{G}^{(Q_{1})}_{1}$ to both sides of
the above identity and then exploiting the fact that 
$\tilde{G}^{(Q_{1})}_{1}$ squares to the identity, we 
arrive at 
\begin{equation*}
[\tilde{X}^{(1)}]^2\ket{\tilde{\phi}}=\ket{\tilde{\phi}},
\end{equation*}
which is equivalent to $[\tilde{X}^{(1)}]^2=\mathbbm{1}$; recall that the reduced density matrix $\rho_{P_1}$ acting on $\mathcal{H}_{P_1}$ is of full rank. Analogously, we can use the condition (\ref{conds}) for $i=2$ to conclude that $[\tilde{Z}^{(1)}]^2=\mathbbm{1}$. 

We can now proceed with the second condition of Lemma \ref{lem_selftest}. From Eq. (\ref{eq_substit_real})
one directly realizes that 
\begin{equation}
    \{\tilde{X}^{(1)},\tilde{Z}^{(1)}\}=0
\end{equation}
To prove the same for the remaining sites we need to 
employ the conditions (\ref{conds}). Let us thus consider those for $i=1$ and $i=2$ and write them as
\begin{equation}
    \tilde{X}^{(1)}\tilde{X}^{(2)}\tilde{G}_1^{(Q_2)}\ket{\tilde{\phi}}=\ket{\tilde{\phi}}
\end{equation}
and
\begin{equation}
    \tilde{Z}^{(1)}\tilde{Z}^{(2)}\tilde{G}_2^{(Q_2)}\ket{\tilde{\phi}}=\ket{\tilde{\phi}},
\end{equation}
where $\tilde{G}_i^{(Q_2)}$ are the parts of  $\tilde{G}_2$ corresponding to the subset $Q_2=\{3,\ldots,N\}$. From these relations we obtain that
\begin{eqnarray}
&&\hspace{-0.5cm}\{\tilde{X}^{(2)},\tilde{Z}^{(2)}\}\ket{\tilde{\phi}}\nonumber\\
&&=(\tilde{Z}^{(1)}\tilde{X}^{(1)}\tilde{G}^{(Q_{2})}_{2}\tilde{G}^{(Q_{2})}_{1}+\tilde{X}^{(1)}\tilde{Z}^{(1)}\tilde{G}^{(Q_{2})}_{1}\tilde{G}^{(Q_{2})}_{2})\ket{\tilde{\phi}}\nonumber\\
&&=\{\tilde{X}^{(1)},\tilde{Z}^{(1)}\}\tilde{G}^{(Q_{2})}_{1}\tilde{G}^{(Q_{2})}_{2}\ket{\tilde{\phi}}=0,
\end{eqnarray}
where we used the fact that $\tilde{G}^{(Q_{2})}_{1}$ and $\tilde{G}^{(Q_{2})}_{2}$ commute which is a consequence of the
fact that $G_1^{\max}$ and $G_2^{\max}$ commute 
[cf. Eqs. (\ref{G1max}) and (\ref{G2max})].

Let us now assume that 
$\{\tilde{X}^{(i)},\tilde{Z}^{(i)}\}=0$ for some $i\geq 2$. Our aim is to 
prove that $\{\tilde{X}^{(i+1)},\tilde{Z}^{(i+1)}\}=0$. 
As mentioned before the regular generators $G_{1}^{\max},\dots,G_{k_{\min}(N)}^{\max}$ stabilizing $V_N^{\max}$ give rise to vectors (\ref{vectorsW}) for every $i\in\{1,\dots,N-1\}$. This by the very definition of these vectors implies that for every pair of adjacent qubits there always exists a pair of generators $G^{\max}_{p}$ and $G^{\max}_{q}$ $(p\neq q)$ that anticommute on exactly these two qubits. Consequently, for subsystems $i$ and $i+1$ the corresponding $\tilde{G}_{j}$ and $\tilde{G}_{q}$ are of the following form
\begin{eqnarray}
\tilde{G}_{p}&=&P_{a}^{(i)}P_{b}^{(i+1)}G_{p}^{(Q_{i,i+1})},\nonumber\\
\tilde{G}_{q}&=&P_{a+1}^{(i)}P_{b+1}^{(i+1)}G_{q}^{(Q_{i,i+1})}
\end{eqnarray}
for some $a,b\in\{0,1\}$, where $Q_{i,i+1}=\{1,\ldots,N\}\setminus \{i,i+1\}$.
We thus have the following chain of equalities
\begin{eqnarray}
\{\tilde{P}_{b}^{(i+1)},\tilde{P}_{b+1}^{(i+1)}\}\ket{\tilde{\phi}}&=&(\tilde{P}_{b}^{(i+1)}\tilde{P}_{b+1}^{(i+1)}\tilde{G}_{q}\tilde{G}_{j}\nonumber\\
&&\hspace{0.5cm}+\tilde{P}_{b+1}^{(i+1)}\tilde{P}_{b}^{(i+1)}\tilde{G}_{j}\tilde{G}_{q})\ket{\tilde{\phi}}\nonumber\\
&=&(\tilde{P}_{a+1}^{(i)}\tilde{P}_{a}^{(i)}\tilde{G}^{(Q_{i,i+1})}_{q}\tilde{G}^{(Q_{i,i+1})}_{j}\nonumber\\
&&\hspace{0.5cm}+\tilde{P}_{a}^{(i)}\tilde{P}_{a+1}^{(i)}\tilde{G}^{(Q_{i,i+1})}_{j}\tilde{G}^{(Q_{i,i+1})}_{q})\ket{\tilde{\phi}}\nonumber\\
&=&\{\tilde{P}_{a}^{(i)},\tilde{P}_{a+1}^{(i)}\}\tilde{G}^{(Q_{i,i+1})}_{j}\tilde{G}^{(Q_{i,i+1})}_{q}\ket{\tilde{\phi}}\nonumber\\
&=&0.
\end{eqnarray}

Now, we can use Lemma \ref{lem_selftest}, which tells us that 
for each site $i$ there exists a unitary matrix $U_i$ such that 
\begin{equation}
    U_i \tilde{X}^{(i)}U_i^{\dagger}=X_i\otimes \mathbbm{1}_{P''_i},\qquad
    U_i \tilde{Z}^{(i)}U_i^{\dagger}=Z_i\otimes \mathbbm{1}_{P''_i},
\end{equation}
where $X_i$ and $Z_i$ are the Pauli matrices acting on the $i$th site and $\mathbbm{1}_{P''_i}$ is an identity acting on $\mathcal{H}_{P''_i}$. Accordingly, 
we have
\begin{equation}
    U_P \tilde{G}_i U_P^{\dagger}= G_i^{\max}\otimes \mathbbm{1}_{P''}
\end{equation}
for $U_P=U_1\otimes\ldots\otimes U_N$. Denoting then 
$\ket{\psi}=(U_P\otimes\mathbbm{1}_E)\ket{\tilde{\phi}}$, 
the conditions (\ref{conds}) rewrite as
\begin{equation}
  (  G_i^{\max}\otimes\mathbbm{1}_E)\ket{\psi}=\ket{\psi}
\end{equation}
for $i=1,\ldots,k_{\min}(N)$. Taking the Schmidt decomposition
of $\ket{\psi}$ with respect to the subsystems $P$ and $E$, one finds that the most general form of $\ket{\psi}$ compatible with 
these conditions is the following one
\begin{equation}
    \ket{\psi}=\sum_i\alpha_i\ket{\psi_i}_{P'}\otimes \ket{\xi_i}_{P''E},
\end{equation}
with $\alpha_i$ being some nonnegative coefficients whose squares sum up to one. Thus, 
\begin{equation}
    (U_P\otimes\mathbbm{1}_E)\ket{\tilde{\phi}}=\sum_i\alpha_i\ket{\psi_i}_{P'}\otimes \ket{\xi_i}_{P''E},
\end{equation}
which completes the proof.
\end{proof}

\section{Exploring the boundaries of the sets of quantum correlations}
\label{boundaries}

A behaviour maximally violating a Bell inequality belongs to the boundary of the set of quantum correlations. Thus, Bell inequalities for whose maximal quantum violation is known can serve as a tool for characterizing the boundaries of the sets of quantum correlations. Here we use our inequalities to identify 
faces of dimension $2^{N-k_{\min}(N)}-1$ in the boundaries of the 
sets $\mathcal{Q}_{N}$. We reach this goal by constructing, for any $N$, a set of $2^{N-k_{\min}(N)}$ behaviours $\mathcal{P}_{i}$ achieving the maximal quantum value of (\ref{NierMax}) and prove them to be affinely independent. These in turn are obtained from a certain set of $2^{N-k_{\min}(N)}$ multipartite qubit states
stabilized by $G_i^{\max}$. To simplify further calculation we will not explicitly state the dependence of $k_{\min}$ on $N$, but keep in mind that this is still a function of $N$.

We establish this result in a few steps. The first one is to consider an additional set of $N-k_{\min}$ operators from the group $\mathbb{G}_N$ of 
the following form
\begin{equation}\label{eq_def_hij}
   H_{i,j}^{\max}=\mathbbm{1}_2^{\otimes \frac{1}{2}(i^2+i+2)}\otimes\mathbbm{1}_2^{\otimes (j-1)}
    \otimes P_{i+1}^{\otimes 2}\otimes \mathbbm{1}_2^{\otimes [N-\frac{1}{2}(i^2+i)-j-2]},
\end{equation}
where $i=1,\ldots,i_{\max}$ with $i_{\max}$ being the largest $i$ fulfilling $i(i+1)/2<N-k_{\min}$, whereas $j=1,\ldots,N-k_{\min}-i_{\max}$. As shown in Appendix \ref{App:Ind} these operators are independent. Moreover, they mutually commute because the only two nontrivial matrices of $H_{i,j}^{\max}$, which are $P_{i+1}$, belong to the block $C_{i+2}$; so either two different $H_{i,j}^{\max}$ have their nontrivial matrices in different blocks, or their nontrivial matrices are the same. It is also not difficult to see that one cannot use $H_{i,j}^{\max}$ to generate $-\mathbbm{1}^{\otimes N}_2$. In fact, they give rise to a nontrivial stabilizer. 

Let us now consider the generators $G_{i}^{\max}$ together with $H_{i,j}^{\max}$.
Appendix \ref{App:Ind} establishes that they form a set of $N$ independent operators. It is also not difficult to see that each $G_{i}^{\max}$ commutes with each $H_{i,j}^{\max}$. This is because the only two nontrivial matrices $P_{i+1}$ of $H^{\max}_{i,j}$ belong to the block $C_{i+2}$. At the same time, the only generator of $\mathbb{S}_{N}$ that has matrices $P_{i}$ in this block is $G_{i+2}^{\max}$ and it has $P_{i}$ on every qubit from that block. So $H^{\max}_{i,j}$ and $G_{i+2}^{\max}$ anticommute on two qubits, hence they commute. The remaining generators from $\mathbb{S}_{N}$ also commute with $H^{\max}_{i,j}$, as either all of their matrices in $C_{i+2}$ are simply $\mathbbm{1}_2$, or one or two equal $P_{i+1}$, whereas the rest are $\mathbbm{1}_2$.
To prove that $G_{i}^{\max}$ and $H_{i,j}^{\max}$ together do not generate $-\mathbbm{1}^{\otimes N}_2$ we notice, as before, that one cannot generate the product $XZXZ$ at any site by any product of four different generators of $\overline{\mathbb{S}}_N$. Namely, there is only one generator that has a matrix $P_{i}$ on the block $C_{i}$, which is $G_{i}^{\max}$.
These two facts ensure that $G_{i}^{\max}$ and $H_{i,j}^{\max}$ generate a stabilizer of a one dimensional subspace. 

We can now use this set of $N$ independent and mutually commuting operators to 
construct the aforementioned multiqubit pure states $\ket{\psi_i}$. To this aim, 
we consider a set of $2^{N-k_{\min}}$ subgroups of $\mathbb{G}_N$ generated by $G_{i}^{\max}$ and $H_{i,j}^{\max}$, however, the latter we take with different signs $\pm 1$; in other words we consider the following sets
\begin{equation}\label{eq_stab_gh}
    \mathbb{S}_N^i=\left\langle \{G_{i}^{\max}\},\{\pm H_{j,k}^{\max}\}\right\rangle,
\end{equation}
where $i=1,\ldots,2^{N-k_{\min}(N)}$. A change of sign in front of the $H_{j,k}^{\max}$ operators has no effect on the commutation of the operators and so to show that every $\mathbb{S}_N^i$ stabilises a nontrivial subspace it is enough to show that $-\mathbbm{1}_2^{\otimes N}\notin \mathbb{S}_N^i$. To prove this let us assume that $-\mathbbm{1}_2^{\otimes N}\in \mathbb{S}_N^i$. All operators from $\mathbb{S}_N^i$ clearly square to $\mathbbm{1}_2^{\otimes N}$ and so $-\mathbbm{1}_2^{\otimes N}$ has to be generated as a product of non repeating generators of $\mathbb{S}_N^i$. This implies that some product of non repeating generators of $\mathbb{S}_N^0=\left\langle \{G_{i}^{\max}\},\{H_{i,j}^{\max}\}\right\rangle$ equals $\pm \mathbbm{1}_2^{\otimes N}$. However, it was already proven that $\mathbb{S}_N^0$ stabilises a nontrivial subspace, so any product of generators cannot be equal to $-\mathbb{1}$ and the subspace stabilised by $\mathbb{S}_{N}^{0}$ is also one-dimensional, so the product of non repeating generators cannot be equal to $\mathbb{1}$. Hence every $\mathbb{S}_N^i$ stabilises a nontrivial subspace.

Similarly we can show that every $\mathbb{S}_N^i$ stabilises a one dimensional subspace. This requires the generators of $\mathbb{S}_N^i$ to be mutually independent. Let us assume that this is not a case, and so there exists some product of generators of $\mathbb{S}_N^i$ that equals $\mathbbm{1}_2^{\otimes N}$. By the same argument as in the previous paragraph we can show that this contradicts the fact that $\mathbb{S}_N^0$ stabilises a nontrivial, one dimensional subspace.
Now, each $\mathbb{S}_N^i$ stabilizes a
multiqubit state $\ket{\psi_i}$ that clearly belongs to $V_N^{\max}$. Moreover, since states $\ket{\psi}_{i}$ are stabilised by (\ref{eq_stab_gh}) it follows that
\begin{equation}
\langle\psi_i| G_{k}^{\max}|\psi_i\rangle   =1
\end{equation}
for any $i=1,\ldots, 2^{N-k_{\min}(N)}$ and $k=1,\ldots,k_{\min}$, whereas
\begin{equation}\label{eq_orthonormal}
\langle\psi_i| H_{k,l}^{\max}|\psi_i\rangle   =\pm 1.
\end{equation}
We can define behaviours $\mathcal{P}_{i}$ corresponding to the states $\ket{\psi_{i}}$ and a choice of observables (\ref{ChoiceN3}) as in (\ref{correlations}). Furthermore, we denote vector elements of the behaviour $\mathcal{P}_{i}$ by:
\begin{equation}
\left\langle \prod_{j\in J}A_{x_{j}}^{(j)}\right\rangle_{\mathcal{P}_{i}}\equiv \bra{\psi_{i}}\prod_{j\in J}^{N}A_{x_{j}}^{(j)}) \ket{\psi_{i}},
\end{equation}
where $x_{j}\in\{0,1\}$, $J\subset \{1,\dots,N\}$ and we chose the observables as in (\ref{ChoiceN3}). To show that all $\mathcal{P}_{i}$ are affinely independent let us first relabel $H_{j,k}^{\max}\to H_{i}$ where $i\in\{1,\dots, N-k_{\min}\}$. The exact correspondence between them is not given, because it will not be necessary to know the exact form of operators $H_{i}$. Next, we denote by $\tilde{H}_{i}$ the quantum operators obtained from generators $H_{i}$ by the substitution (\ref{assign1}), (\ref{assign22}): $H_{i}\rightarrow \tilde{H}_{i}$. Finally, in order to show that all $\mathcal{P}_{i}$ are affinely independent, it is sufficient to only consider the following expected values
\begin{equation}\label{eq_corr}
\left\langle\prod_{i\in J} \tilde{H}_{i} \right\rangle,
\end{equation}
for all nonempty $J\subset \{1,\dots,2^{N-k_{\min}}-1\}$ and one of the expectation values $\langle \tilde{G}_{i} \rangle$. The operator $\tilde{G}_{i}$ is an observable created from the operator $G_{i}^{\max}$ by the substitution (\ref{assign1}), (\ref{assign22}). It is irrelevant which $\tilde{G}_{i}$ we chose, because for all $\mathcal{P}_{j}$ we have $\langle \tilde{G}_{i} \rangle_{\mathcal{P}_{j}}=1$. Let us also mention here that (\ref{eq_corr}) are well defined expectation values, because on every qubit we either have $\mathbb{1}$, $A_{0}$ or $A_{1}$. This is true since from (\ref{eq_def_hij}) we see that if two operators $H_{i,j}^{\max}$ have different index $i$ then on no qubit do they both have a nontrivial matrix, and if they have the same $i$ then they have the same nontrivial matrices $P_{i+1}$. Hence the product of $H_{i}$ operators will never result in an operator that has a matrix proportional to $Y$ on some qubit.

It is also important to note that since we consider only behaviours $\mathcal{P}_{j}$ obtained from the states $\ket{\psi_{j}}$ stabilised by (\ref{eq_stab_gh}), we can write for every $\mathcal{P}_{j}$,
\begin{equation}\label{culo}
\left\langle\prod_{i\in M} \tilde{H}_{i} \right\rangle_{\mathcal{P}_{j}}=\prod_{i\in M}\left\langle\tilde{H}_{i} \right\rangle_{\mathcal{P}_{j}}.
\end{equation}

Now, let us move on to proving that all $\mathcal{P}_j$ are linearly independent. To this aim, let us assume that for some $q\in\{1,\dots, 2^{N-k_{\min}}-1\}$ we have
\begin{equation}\label{eq_beh}
\mathcal{P}_{q}=\sum_{i\in M} a_{i}\mathcal{P}_{i},
\end{equation}
for some subset $M\subset \{1,\dots, 2^{N-k_{\min}}-1\}$ and $a_{i}\in\mathbb{R}$. Every expected value (\ref{eq_corr}) gives us one condition on the values of $a_{i}$. We can add all these conditions, which gives us 
\begin{eqnarray}
\sum_{i\in M}a_{i}(1+\langle \tilde{H}_{1}\rangle_{\mathcal{P}_{i}}+\dots + \langle \tilde{H}_{1}\dots \tilde{H}_{N-k_{\min}}\rangle_{\mathcal{P}_{i}})\nonumber\\=
1+\langle \tilde{H}_{1}\rangle_{\mathcal{P}_{q}}+\dots + \langle \tilde{H}_{1}\dots \tilde{H}_{N-k_{\min}}\rangle_{\mathcal{P}_{q}},\nonumber\\
\end{eqnarray}
which, using Eq. (\ref{culo}), can be rewritten as 
\begin{equation}\label{eq_iinn}
\sum_{i\in M}a_{i}\prod_{j=1}^{N-k_{\min}}(1+\langle \tilde{H}_{j}\rangle_{\mathcal{P}_{i}})=\prod_{j=1}^{N-k_{\min}}(1+\langle \tilde{H}_{j}\rangle_{\mathcal{P}_{q}}).
\end{equation}
Clearly, the only nonzero term comes from the behaviour $\mathcal{P}_{i}$ for which all $\langle \tilde{H}_{j}\rangle=1$ for all $j$. Then if $i\neq q$ we have $a_{i}=0$ and if $i=q$ then $a_{i}=1$. The fact that the only nonzero term in (\ref{eq_iinn}) is the one for which all $\langle \tilde{H}_{i}\rangle=1$ can be exploited to calculate the rest of coefficients $a_{i}$. Equation (\ref{eq_iinn}) is a result of adding all conditions from
(\ref{eq_beh}), however we can also subtract the conditions. To calculate the value of $a_{m}$ we add a condition from an expectation value $\langle \prod_{i\in M'} \tilde{H}_{i}\rangle$ if $\langle \prod_{i\in M'} \tilde{H}_{i}\rangle_{\mathcal{P}_{m}}=1$ and subtract it if $\langle \prod_{i\in M'} \tilde{H}_{i}\rangle_{\mathcal{P}_{m}}=-1$. As a result we get the following equation
\begin{eqnarray}
\sum_{i\in M}a_{i}\prod_{j=1}^{N-k_{\min}}(1+&&\langle \tilde{H}_{j}\rangle_{\mathcal{P}_{m}}\langle \tilde{H}_{j}\rangle_{\mathcal{P}_{i}})=\\ \nonumber
&&=\prod_{j=1}^{N-k_{\min}}(1+\langle \tilde{H}_{j}\rangle_{\mathcal{P}_{m}}\langle \tilde{H}_{j}\rangle_{\mathcal{P}_{q}}).  
\end{eqnarray}
From this it follows that $a_{m}=0$ for $m\neq q$ and $a_{m}=1$ for $m=q$, and since we can construct this equation for all $m\in M$, all $\mathcal{P}_{i}$ are affinely independent. This in turn means that the face of the quantum correlation set corresponding to the Bell inequality (\ref{NierMax}) is $(2^{N-k_{\min}}-1)$-dimensional.

\section{Conclusions}
\label{Concl}
Here we have shown that genuinely entangled stabilizer subspaces of maximal dimension can be self-tested from maximal violation of certain multipartite Bell inequalities. We have established this result in a few steps. First, we have introduced a framework allowing to decide whether stabilizer subspaces in  multiqubit Hilbert space are genuinely entangled. We have then 
determined what is the maximal dimension of genuinely entangled subspaces
achievable within the stabilizer formalism, and, using the above framework, we have provided a construction of such a maximally-dimensional 
stabilizer subspace. Third, we constructed Bell inequalities maximally violated by any pure state from these subspaces and showed that 
maximal violation of these inequalities can be used to self-test them. 
Finally, we have used our Bell inequalities to identity 
"flat structures" in the boundaries of the sets of quantum correlations in the considered Bell scenarios.

Our work opens many possibilities for further research. 
First, we provide self-testing strategies only for particular stabilizer subspaces and it is interesting to explore whether our approach can be used to self-test any genuinely entangled subspace within the stabilizer formalism. 
Then, a much more challenging problem would be of course to design self-testing strategies for entangled subspaces that do not fall into the stabilizer formalism; recall, however, that even a simpler problem of whether any genuinely entangled multipartite state can be self-tested remains open. On the other hand, it seems that our formalism to describe entanglement within the stabilizer formalism can be generalized to the case of multi-qudit Hilbert spaces of local dimension being a prime number or a power of prime (see, e.g., Ref. \cite{Hein}). It thus would be interesting to explore whether such a formalism is of use in studying entanglement properties of stabilizer subspaces, in particular those corresponding to quantum error codes. At the same time, by combining it with the construction of Bell inequalities maximally violated by the two-qudit maximally entangled states of Ref. \cite{Quantum} one could perhaps design self-testing methods for genuinely entangled subspaces in multiqudit Hilbert spaces.

\appendix

\section{A proof of Fact \ref{fakt2}}
\label{AppA}

Here we provide a proof of Fact \ref{fakt2}, i.e., that 
the operators defined in Eqs. (\ref{eq_stab}) generate a subspace of maximal dimension $2^{N-k_{\min}(N)}$. To this end, we show that they mutually commute and
that $-\mathbbm{1}^{\otimes N}_2\notin \langle G_i^{\max}\rangle$.
To simplify the notation in what follows we drop the subscript "max"
from the generators $G_{i}^{\max}$.

To prove the first condition let us begin by dividing our $N$ qubits
$\{1,\ldots,N\}$ into $k=k_{\min}(N)$ disjoint and nonempty blocks defined as 
\begin{equation}\label{eq_cluster}
C_{i}=\begin{cases}
\{i\},& i=1,2,\\[1ex]
\{2+\frac{(i-1)(i-2)}{2},\dots,1+\frac{i(i-1)}{2}\},& i=3,\dots,k-1,\\[1ex]
\{2+\frac{(k-1)(k-2)}{2},\dots,N\},& i=k,
\end{cases}
\end{equation}
As the number of qubits in the block $C_{i}$ equals
\begin{equation}\label{eq_cluster_size}
|C_{i}|=\begin{cases}
1,& i=1,\\[1ex]
i-1,& i=2,\dots,k-1,\\[1ex]
N-\frac{(k-1)(k-2)}{2}-1,& i=k.
\end{cases}
\end{equation}
one verifies that
\begin{equation}
 \sum_{i=1}^{k}|C_i|=N.   
\end{equation}

Now, let us examine the form of $G_{i}^{(C_{l})}$ for different $l\in\{1,\dots,k-2\}$. The first term in Eq. (\ref{eq_stab}) is $\mathbbm{1}^{\otimes(i-1)(i-2)/2}_2$ and therefore we have for $i\geq 3$:
\begin{equation}
G_{i}^{(C_{l})}=\mathbbm{1}_2^{\otimes(l-1+\delta_{l,1})}\quad \text{for }\quad l=1,\dots,i-2,
\end{equation}
and
\begin{equation}\label{form1}
G_{i}^{(C_{i-1})}=\mathbbm{1}_2^{\otimes(i-3)}\otimes P_{i}.
\end{equation}
The matrix $P_{i}$ on the last qubit of $G_{i}^{(C_{i-1})}$ comes from the second term in Eq. (\ref{eq_stab}), i.e., $P_{i}^{\otimes(i+1)}.$
The first matrix of this term is in the block $C_{i-1}$ and from Eq. (\ref{eq_cluster_size}) we infer that:
\begin{align*}
\begin{split}
G_{i}^{(C_{i})}&=P_{i}^{\otimes(i-1)},\\
G_{i}^{(C_{i+1})}&=P_{i}\otimes G_{i}^{(C_{i+1}\setminus\{2+(i-1)(i-2)/2\})}.
\end{split}
\end{align*}
As we can see, this gives us only the first matrix of $G_{i}^{(C_{i+1})}$. To see the whole $G_{i}^{(C_{i+1})}$ we need to look at the third term in Eq. (\ref{eq_stab}), i.e., 
\begin{equation}\label{eq_term_3}
\bigotimes_{q=i+2}^{q_{\max}(i,N)}\big(\mathbbm{1}_2^{\otimes(q-3)}\otimes P_{q+1}^{\otimes 2}\big).
\end{equation}

For $q=i+2$ we have $i-1$ of $\mathbbm{1}_2$ in a row. From Eq. (\ref{eq_cluster_size}) we know that $|C_{i+1}|=i$ for 
$i\in\{1,\dots,k-2\}$ so we have
\begin{equation*}
G_{i}^{(C_{i+1})}=P_{i}\otimes\mathbbm{1}^{\otimes(i-1)}_2. 
\end{equation*}
Next, we need to find $G_{i}^{(C_{l})}$ for $l\in\{i+2,\dots,k-1\}$. In (\ref{eq_term_3}) the nontrivial matrices occur in pairs and the qubits of the $(l-i)$th pair [so for $q=l+1$ in Eq. (\ref{eq_term_3})] are as follows
\begin{eqnarray*}
\hspace{-3cm}&&\left\{\frac{1}{2}(i-1)(i-2)+i+\frac{1}{2}(i+1+l)(l-1),\right.\nonumber\\
&&\left.\frac{1}{2}(i-1)(i-2)+i+1+\frac{1}{2}(i+1+l)(l-1)\right\}.
\end{eqnarray*}
Comparing this to qubits of $C_{l}$ from (\ref{eq_cluster}) we can conclude that the $(l-i)$th pair is present in $C_{l}$ in the following way:
\begin{equation*}
G_{i}^{(C_{l})}=\mathbbm{1}_2^{\otimes(l-i-2)}\otimes P_{l+1}^{\otimes2}\otimes\mathbbm{1}^{\otimes(i-1)}_2,
\end{equation*}
As for the last block $C_{k}$ we have two cases. First, if
$q_{\max}(i,N)\cdot[q_{\max}(i,N)+1]/2> N-1$, then
\begin{equation}\label{eq_case_1}
G_{i}^{(C_{k})}=\mathbbm{1}_2^{\otimes(k-i-2)}\otimes P_{k-1}^{\otimes2}\otimes\mathbbm{1}_2^{\otimes(i-\frac{1}{2}k(k-1)+N-2)}.
\end{equation}
On the other hand, if $q_{\max}(i,N)\cdot[q_{\max}(i,N)+1]/2\leq N-1$, then
\begin{equation}\label{eq_case_2}
G_{i}^{(C_{k})}=\mathbbm{1}_2^{\otimes\left[k-\frac{1}{2}k(k-1)+N-2
\right]}.
\end{equation}

\begin{widetext}
Summarizing, 

\begin{equation}\label{eq_stab_i}
G_{i}^{(C_{l})} = \begin{cases}
\mathbbm{1}_2^{\otimes(l-1+\delta_{l,1})}, & l=1,\dots,i-2,\\
\mathbbm{1}_2^{\otimes(i-3)}\otimes P_{i}, & l=i-1,\\
P_{i}^{\otimes(i-1)}, &l=i,\\
\mathbbm{1}_2^{\otimes(l-i-2+\delta_{l-i,1})}\otimes P_{l+1}^{\otimes(2-\delta_{l-i,1})}\otimes\mathbbm{1}_2^{\otimes(i-1)}, &l=i+1,\dots,k-1,
\end{cases}
\end{equation}
and for $l=k$:
\begin{equation}\label{eq_stab_i_n}
G_{i}^{(C_{k})}=\begin{cases}
\mathbbm{1}_2^{\otimes(k-i-2)}\otimes P_{k-1}^{\otimes2}\otimes\mathbbm{1}_2^{\otimes(i-\frac{1}{2}k(k-1)+N-2)}, & q_{\max}(i,N)\cdot[q_{\max}(i,N)+1]/2> N-1,\\
\mathbbm{1}_2^{\otimes\big(k-\frac{1}{2}k(k-1)+N-2\big)}, & q_{\max}(i,N)\cdot[q_{\max}(i,N)+1]/2\leq N-1,
\end{cases}
\end{equation}
By the same argument we have
\begin{equation}\label{eq_stab_n-1}
G_{k-1}^{(C_{l})} = \begin{cases}
\mathbbm{1}_2^{\otimes(l-1+\delta_{l,1})}, &l=1,\dots,k-3,\\
\mathbbm{1}_2^{\otimes(k-4)}\otimes P_{k-1}, &l=k-2,\\
P_{k-1}^{\otimes(k-2)}, &l=k-1,\\
P_{k-1}\otimes\mathbbm{1}_2^{\otimes\left[k-\frac{1}{2}k(k-1)+N-3\right]}, &l=k,
\end{cases}
\end{equation}
and
\begin{equation}\label{eq_stab_n}
G_{k}^{(C_{l})} = \begin{cases}
\mathbbm{1}_2^{\otimes(l-1+\delta_{l,1})}, &l=1,\dots,k-2,\\
\mathbbm{1}_2^{\otimes(k-3)}\otimes P_{k}, &l=k-1,\\
P_{k}^{\otimes\left[k-\frac{1}{2}k(k-1)+N-2\right]}, &l=k.\\ 
\end{cases}
\end{equation}
\end{widetext}
Now, we can use this description of the generators $G_{i}^{\max}$ to show they mutually commute. Using Eqs. (\ref{eq_stab_i}), (\ref{eq_stab_i_n}), (\ref{eq_stab_n-1}) and (\ref{eq_stab_n-1}) one can check that the following holds true
\begin{equation}
\left\{G_{i}^{(C_{l})},G_{i+1}^{(C_{l})}\right\}=0\qquad (l=i,i+1),
\end{equation}
and
\begin{eqnarray}
&&\hspace{-1cm}\left[G_{i}^{(C_{l})},G_{i+1}^{(C_{l})}\right]=0\nonumber\\
&&\hspace{1cm}(l=1,\dots i-2,i+1,\dots,k),
\end{eqnarray}
for all $i=1,\dots,k-1$. Similarly we have
\begin{equation}
\left[G_{j}^{(C_{l})},G_{i}^{(C_{l})}\right]=0\qquad
(l=1,\dots,k),
\end{equation}
for all $i=1,\dots,k$ and all $j=1,\dots,i-2$. From this we can infer that
\begin{equation}
\left[G_{i}^{\max},G_{j}^{\max}\right]=0  
\end{equation}
is true for all $i,j=1,\dots,k$.

Let us now show that it is not possible to generate $-\mathbbm{1}^{\otimes N}_2$ from $G^{\max}_i$. 
Due to the fact that $G_i^{\max}$ are composed of only $X$ and $Z$,
$-\mathbbm{1}_2^{\otimes N}$ can only be generated from the product $XZXZ$. 
Due to the fact that $G_i^{\max}$ mutually commute and square to identity, 
such a product can be achieved only from four distinct generators. 
To see whether the product $XZXZ$ can be constructed, let us analyse $(G_{i}^{\max})^{(C_{l})}$ for every $i=1,\dots,k$ and some $l$. From Eqs. (\ref{eq_stab_i}), (\ref{eq_stab_i_n}), (\ref{eq_stab_n-1}) and (\ref{eq_stab_n}) we can see that the only generator that has matrices $P_{i}$ on the block $C_{l}$ is $G_{l}^{\max}$ and the rest have either $P_{i+1}$ or $\mathbbm{1}_2$. Hence, 
the product $P_{i}P_{i+1}P_{i}P_{i+1}$ cannot be constructed with four 
distinct generators $G_i^{\max}$.

Next, we need to prove that our stabilizer $\mathbb{S}=\langle G_{1},\dots,G_{k}\rangle$ stabilises a GME subspace. Let us consider two generators $G_{i}$ and $G_{j}$ and, without any loss of generality, let us assume that $i>j$. As shown before they commute, but now we are interested in anticommutations on the single qubit. Let us once again consider blocks of qubits indexed by $l$. From (\ref{eq_stab_i}), (\ref{eq_stab_i_n}) (\ref{eq_stab_n-1}) and (\ref{eq_stab_n}) we can conclude that the generator $G_{i}$ anticommutes on single qubit with the generator $G_{j}$ in blocks $C_{i-1}$ and $C_{i}$. For $j\in\{1,\dots,i-2\}$ they anticommute on two consecutive qubits from $C_{i}$. For $j=i-1$ they anticommute on the last qubit of $C_{i-1}$ and the first of $C_{i}$ so those two qubits are also consecutive. This is true for any pair of generators, so we have $N-1$ vectors from $W_{N,d}$ of the form $w_{p}=\hat{e}_{p}+\hat{e}_{p+1}$ where $p\in\{1,\dots,N-1\}$. The last step is to show that these vectors are linearly independent. We will attempt that by expressing $i$ and $j$ in terms of the index $p$. We still assume that $i>j$ which means that $(p+1)$'th qubit has to be a part of $i$'th block. This gives us a condition on $p$
\begin{equation*}
p+1>1+\frac{i}{2}(i-1) \quad \Leftrightarrow \quad p>\frac{i}{2}(i-1),
\end{equation*}
where the right side is just a number of stabilizing operators in blocks with index smaller that $i$. For $(p+1)$'th qubit to be a part of $i$'th block, we have to take the smallest positive integer $i$ that fulfils this inequality, and hence we get
\begin{equation*}
i=\left\lceil\frac{1+\sqrt{1+8p}}{2}\right\rceil.
\end{equation*}
In the $i$'th block, the generator $G_{j}$ has it's second $P_{i+1}$ matrix on the qubit of an index:
\begin{equation*}
1+\frac{i}{2}(i-1)+i-j,
\end{equation*}
but it is also equal to $p+1$ so we can write
\begin{equation*}
j=\frac{i}{2}(i+1)-p.
\end{equation*}
So to summarise, for every vector $w_{p}=\hat{e}_{p}+\hat{e}_{p+1}$ where $p\in\{1,\dots,N-1\}$ we can find $i$ and $j$ such that $G_{i}$ and $G_{j}$ anticommute only on qubits $p$ and $p+1$ so $v_{i,j}=w_{p}$. Since vectors $w_{p}$ form a basis in $W_{N}$ we have $|K(\mathbb{S})|=N-1$.

We finally notice that $G_i^{\max}$ are certainly independent because
we already know that they stabilize a GME subspace and if they were not independent they would necessarily stabilise a subspace of  
dimension violating the upper bound in Eq. (\ref{BoundDim}).

\section{Bell inequalities for GME stabilizer subspaces from Sec. \ref{SecIIIB}}
\label{AppB}

Here we construct Bell inequalities maximally violated by 
the subspaces $\mathcal{V}_N$ introduced in Sec. \ref{SecIIIB}.

Consider the stabilizing operators $\mathcal{G}_i$ given in Eq. 
(\ref{wzory}). To all Pauli matrices $X$ and $Z$ appearing in them 
we assign, as before, the observables $A_j^{(i)}$ or their combinations. Precisely, for the first observer we make the following assignments
\begin{equation}
X\to A_{0}^{(1)}+A_1^{(1)},\qquad
Z\to A_{0}^{(1)}-A_1^{(1)},
\end{equation}
whereas for the remaining observers $P_i$ $(i=2,\ldots,N)$:
\begin{equation}
X\to A_{0}^{(i)}, \qquad Z\to A_{1}^{(i)}
\end{equation}

Then, using these assignments, we associate an expectation value to each generator $\widetilde{G}_i$. Precisely, to the first, the second, and the last ones we associate
\begin{eqnarray}
\mathcal{G}_1&\to& \langle (A_{0}^{(1)}+A_1^{(1)})A_0^{(2)}\ldots A_0^{(N)}\rangle,\nonumber\\
\mathcal{G}_2&\to& \langle (A_{0}^{(1)}-A_1^{(1)})A_1^{(2)}A_0^{(3)} A_0^{(N)}\rangle,\nonumber\\
\mathcal{G}_{N/2+1}&\to& \langle (A_{0}^{(1)}+A_1^{(1)})A_0^{(N-2)}A_1^{(N-1)} A_1^{(N)}\rangle,\nonumber\\
\end{eqnarray}
while to the remaining ones 
\begin{equation}
    \mathcal{G}_i\to \langle A_0^{(2i-4)}A_1^{(2i-3)}A_1^{(2i-2)}A_0^{(2i-1)}\rangle
\end{equation}
for $i=3,\ldots,N/2$. Adding the obtained expectation values and multiplying the first one by two, we obtain the following Bell inequality
\begin{widetext}
\begin{eqnarray}
  \mathcal{I}_{N}&:=&  \langle (A_{0}^{(1)}+A_1^{(1)})A_0^{(2)}\ldots A_0^{(N)}\rangle+
    2\langle (A_{0}^{(1)}-A_1^{(1)})A_1^{(2)}A_0^{(3)} A_0^{(N)}\rangle+\langle (A_{0}^{(1)}+A_1^{(1)})A_0^{(N-2)}A_1^{(N-1)} A_1^{(N)}\rangle\nonumber\\
    &&+\sum_{i=3}^{N/2}\langle A_0^{(2i-4)}A_1^{(2i-3)}A_1^{(2i-2)}A_0^{(2i-1)}\rangle\leq \frac{N}{2}+2,
\end{eqnarray}
where the maximal classical value $\widetilde{\beta}_C^{N}=N/2+2$ of this Bell expression was found by optimizing $\mathcal{I}_{N}$ over all local deterministic correlations. 

Importantly, the CHSH-like construction of this inequality makes it also possible to directly determine the maximal quantum value of $I_{\mathrm{ex}}$. Namely, we have the following observation.
\begin{fakt}
The maximal quantum value of $\mathcal{I}_{N}$ is
\begin{equation}\label{betaQ}
 \widetilde{ \beta}_Q^{N}:=  \max_Q \mathcal{I}_{N}=\frac{N}{2}+2(2\sqrt{2}-1).
\end{equation}
\end{fakt}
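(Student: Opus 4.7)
The plan is to mirror the strategy used for the analogous statement about $I_N^{\max}$ in Sec.~\ref{ConstructionN}: establish an upper bound on $\widetilde{\beta}_Q^N$ through a sum-of-squares (SOS) decomposition of the shifted Bell operator, and then saturate it with an explicit choice of observables acting on any state in $\mathcal{V}_N$.

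To set things up, I introduce the Hermitian operators $\widetilde{G}_i$ obtained by applying the substitutions $X_1\mapsto (A_0^{(1)}+A_1^{(1)})/\sqrt{2}$, $Z_1\mapsto (A_0^{(1)}-A_1^{(1)})/\sqrt{2}$, $X_i\mapsto A_0^{(i)}$, $Z_i\mapsto A_1^{(i)}$ for $i\geq 2$ to the cyclic stabilizers of Eq.~(\ref{wzory}). The Bell operator then takes the compact form $\mathcal{B}_N = \sqrt{2}\,\widetilde{G}_1 + 2\sqrt{2}\,\widetilde{G}_2 + \sqrt{2}\,\widetilde{G}_{N/2+1} + \sum_{i=3}^{N/2}\widetilde{G}_i$.

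The central claim I would verify is the identity
\[
\left[\tfrac{N}{2}+2(2\sqrt{2}-1)\right]\mathbbm{1} - \mathcal{B}_N = \tfrac{1}{\sqrt{2}}(\mathbbm{1}-\widetilde{G}_1)^2 + \sqrt{2}(\mathbbm{1}-\widetilde{G}_2)^2 + \tfrac{1}{\sqrt{2}}(\mathbbm{1}-\widetilde{G}_{N/2+1})^2 + \tfrac{1}{2}\sum_{i=3}^{N/2}(\mathbbm{1}-\widetilde{G}_i)^2.
\]
Each summand on the right is a square of a Hermitian operator and hence positive semidefinite, so the identity gives $\widetilde{\beta}_Q^N\leq N/2+2(2\sqrt{2}-1)$. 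Its verification rests on a short calculation: $\widetilde{G}_i^2=\mathbbm{1}$ for $i\geq 3$, while $\widetilde{G}_1^2 = \widetilde{G}_{N/2+1}^2 = \mathbbm{1}+\tfrac{1}{2}\{A_0^{(1)},A_1^{(1)}\}\otimes\mathbbm{1}^{\otimes(N-1)}$ and $\widetilde{G}_2^2 = \mathbbm{1}-\tfrac{1}{2}\{A_0^{(1)},A_1^{(1)}\}\otimes\mathbbm{1}^{\otimes(N-1)}$; the anticommutator residues cancel because $\tfrac{1}{\sqrt{2}}\cdot\tfrac{1}{2}+\tfrac{1}{\sqrt{2}}\cdot\tfrac{1}{2}-\sqrt{2}\cdot\tfrac{1}{2}=0$, the linear parts regenerate $-\mathcal{B}_N$, and the identity parts sum exactly to $N/2+2(2\sqrt{2}-1)$.

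To saturate the bound, I plug in $A_0^{(1)}=(X+Z)/\sqrt{2}$, $A_1^{(1)}=(X-Z)/\sqrt{2}$ and $A_0^{(i)}=X$, $A_1^{(i)}=Z$ for $i\geq 2$, so that $A_0^{(1)}\pm A_1^{(1)}=\sqrt{2}\,X$ or $\sqrt{2}\,Z$. The Bell operator then collapses to the weighted sum of stabilizers $\mathcal{B}_N = \sqrt{2}\,\mathcal{G}_1 + 2\sqrt{2}\,\mathcal{G}_2 + \sqrt{2}\,\mathcal{G}_{N/2+1} + \sum_{i=3}^{N/2}\mathcal{G}_i$. Any pure state $\ket{\psi}\in\mathcal{V}_N$ is a $+1$ eigenstate of every $\mathcal{G}_i$, so $\bra{\psi}\mathcal{B}_N\ket{\psi}=\sqrt{2}+2\sqrt{2}+\sqrt{2}+(N/2-2)=N/2+2(2\sqrt{2}-1)$, matching the SOS upper bound. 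The main obstacle is the SOS bookkeeping: the four weights $\{1/\sqrt{2},\sqrt{2},1/\sqrt{2},1/2\}$ must simultaneously (i) reassemble $-\mathcal{B}_N$ in the linear part and (ii) annihilate the anticommutator residues arising from the three non-idempotent operators $\widetilde{G}_1,\widetilde{G}_2,\widetilde{G}_{N/2+1}$; it is precisely this balance that forces the factor $2$ in front of the $\mathcal{G}_2$-term in the definition of $\mathcal{I}_N$.
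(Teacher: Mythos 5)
Your proposal is correct and follows essentially the same route as the paper: the identical sum-of-squares decomposition with weights $1/\sqrt{2},\sqrt{2},1/\sqrt{2},1/2$ (the paper writes the squared operators out explicitly rather than as $\widetilde{G}_i$, but they are the same objects), followed by saturation with the same choice of observables that collapses $\mathcal{B}_N$ to $\sqrt{2}(\mathcal{G}_1+2\mathcal{G}_2+\mathcal{G}_{N/2+1})+\sum_{i=3}^{N/2}\mathcal{G}_i$. Your explicit bookkeeping of the anticommutator residues and their cancellation is a welcome expansion of the paper's ``one checks by hand.''
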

\begin{proof}Let $B_N$ be a Bell operator obtained from the Bell expression $\mathcal{I}_{N}$ for an arbitrary choice of the local dichotomic observables $A_{0/1}^{(i)}$. Then, 
one checks by hand that following sum-of-squares decomposition holds true:
\begin{eqnarray}
\left[\frac{N}{2}+2(2\sqrt{2}-1)\right]\mathbbm{1}-B_N
&=&\frac{1}{\sqrt{2}}\left(\mathbbm{1}-\frac{A_{0}^{(1)}+A_1^{(1)}}{\sqrt{2}}A_0^{(2)}\ldots A_0^{(N)}\right)^2+\sqrt{2}\left(\mathbbm{1}-\frac{A_{0}^{(1)}-A_1^{(1)}}{\sqrt{2}}A_1^{(2)}A_0^{(3)} A_0^{(N)}\right)^2\nonumber\\
&&+\frac{1}{\sqrt{2}}\left(\mathbbm{1}-\frac{A_{0}^{(1)}+A_1^{(1)}}{\sqrt{2}}A_0^{(N-2)}A_1^{(N-1)} A_1^{(N)}\right)^2\nonumber\\
&&+\frac{1}{2}\sum_{i=3}^{N/2}\left(\mathbbm{1}-A_0^{(2i-4)}A_1^{(2i-3)}A_1^{(2i-2)}A_0^{(2i-1)}\right)^2,\nonumber\\
\end{eqnarray}
meaning that  
$\max_Q \mathcal{I}_{N}\leq [N/2+2(2\sqrt{2}-1)]$. To prove this 
bound to be tight we can easily construct a quantum realization for which the value of $\mathcal{I}_{N}$ equals 
$N/2+2(2\sqrt{2}-1)$. In fact, by taking, as before,
\begin{equation}
A_0^{(1)}=\frac{X+Z}{\sqrt{2}},\qquad    
A_1^{(1)}=\frac{X-Z}{\sqrt{2}}
\end{equation}
and $A_{0/1}^{(i)}=X/Z$ for the remaining parties $i=2,\ldots,N$, we bring the Bell operator $B_N$ to a weighted sum of the stabilizing operators $\widetilde{G}_i$, i.e., $B=\sqrt{2}(\mathcal{G}_1+2\mathcal{G}_2+\mathcal{G}_{N/2+1})+ \mathcal{G}_{3}+\ldots+\mathcal{G}_{N/2}$, whose maximal eigenvalue is $N/2+2(2\sqrt{2}-1)$ and it is achieved by any state stabilized by $\mathcal{G}_i$.
\end{proof}
\end{widetext}

It is worth noting that $\widetilde{\beta}_C^{N}<\widetilde{\beta}_Q^{N}$ for any even $N\geq 4$, and thus any Bell inequality in this construction is nontrivial.

\section{Proof of independence}
\label{App:Ind}

Our aim here is to show that $G_i^{\max}$ together with 
$H_{i,j}^{\max}$ form a set of $N$ independent generators.
To do this let us first introduce a simple condition that 
enables verifying independent of elements of $\mathbb{G}_N$.

Let us consider a set of operators $G_i\in \mathbb{G}_N$
$(i=1,\ldots,k)$ that for simplicity we assume to be 
composed of $X$, $Z$ or $\mathbbm{1}_2$. We say that 
an operator $G_i$ has a unique matrix $P_j$ on some qubit
if the every other operator $G_i$ has either $P_{j+1}$
or $\mathbbm{1}_2$ on this qubit, where $P_{i}$ is defined in 
Eq. (\ref{Pi}). We can now state our criterion.
\begin{fakt}
If every $G_{i}$ $(i=1,\ldots,k)$ has a unique matrix at some position, then 
the operators $G_i$ are independent.
\end{fakt}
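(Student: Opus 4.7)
The plan is to argue by contradiction using the local structure at the distinguished positions. Suppose the $G_i$ are dependent. Since each $G_i$ is built only from $X$, $Z$, $\mathbbm{1}_2$, any product of them is Hermitian with overall sign $\pm 1$, and since we are working inside a stabilizer in which $-\mathbbm{1}_2^{\otimes N}$ does not appear, dependence would force the existence of a nonempty $S\subseteq\{1,\ldots,k\}$ with
\begin{equation}
\prod_{i\in S} G_i = \pm \mathbbm{1}_2^{\otimes N}.
\end{equation}
I would then derive a contradiction by looking at a single qubit.

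First I would fix an arbitrary $i_0\in S$ and, using the hypothesis, pick a position $p$ at which $G_{i_0}$ carries a unique Pauli $P_j$ (either $X$ or $Z$). By the definition of ``unique,'' every other generator $G_k$ with $k\neq i_0$ carries at position $p$ either $P_{j+1}$ (the other Pauli) or $\mathbbm{1}_2$. Restricting the product $\prod_{i\in S}G_i$ to the $p$-th tensor factor therefore gives
\begin{equation}
P_j \cdot P_{j+1}^{m},
\end{equation}
where $m$ is the number of indices $k\in S\setminus\{i_0\}$ whose $k$-th generator carries $P_{j+1}$ at position $p$. Since $P_{j+1}^2 = \mathbbm{1}_2$, this local product equals either $P_j$ (when $m$ is even) or a nonzero multiple of $P_jP_{j+1}\propto Y$ (when $m$ is odd). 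In either case it is not proportional to $\mathbbm{1}_2$.

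But if $\prod_{i\in S}G_i = \pm\mathbbm{1}_2^{\otimes N}$, the restriction to the $p$-th qubit must be $\pm\mathbbm{1}_2$, which contradicts the previous paragraph. Hence no such nontrivial $S$ exists, and the $G_i$ are independent.

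I do not expect any serious obstacle here; the argument is essentially bookkeeping with local Pauli factors. The only subtlety is making sure that ``unique matrix'' is used correctly—namely that all other generators carry $P_{j+1}$ or $\mathbbm{1}_2$ (never $P_j$) at the distinguished site, which is precisely what the definition in the Fact provides. This single-site obstruction is what rules out any product from collapsing to $\pm\mathbbm{1}_2^{\otimes N}$.
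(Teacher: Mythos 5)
Your argument is correct and rests on the same single-site obstruction the paper uses: at the position where $G_{i_0}$ carries its unique Pauli $P_j$, any product involving $G_{i_0}$ once and the other generators (which carry only $P_{j+1}$ or $\mathbbm{1}_2$ there) cannot reduce to a multiple of $\mathbbm{1}_2$, so no nontrivial product collapses to $\pm\mathbbm{1}_2^{\otimes N}$. The paper states this more tersely as ``$G_i$ cannot be generated by the others,'' but the content is identical.
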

\begin{proof}
Let us say that generator $G_{i}$ has a unique matrix $P_{j}$ on the $k$'th qubit. Then by taking products of the remaining operators $G_j$ $(j\neq i)$ one cannot construct $G_i$; only operators that on the $k$'th qubit have either $P_{j+1}$ or $\mathbbm{1}_2$ can be created from them.
\end{proof}

This simple observation will be of a great use for our goal. 
Let us therefore show that all the operators $G_i^{\max}$ stabilizing 
the the maximally-dimensional GME subspaces $V_N^{\max}$ as well as
$H_{i,j}^{\max}$ have unique matrices. 

We begin with $G_{i}^{\max}$ and consider generators given by Eqs. (\ref{eq_stab_i}), (\ref{eq_stab_i_n}), (\ref{eq_stab_n-1}) and (\ref{eq_stab_n}). The unique matrices for a generator $G_{i}^{\max}$ are in the block $C_{i}$, because only this generator has matrices $P_{i}$ in $C_{i}$. The rest of generators on $C_{i}$ has either $\mathbb{1}$ or $P_{i+1}$. As an example we can look at the generators for $N=7$:
\begin{equation}\label{Gi7}
\begin{tabular}{*{15}{@{\hspace{.5mm}}c@{\hspace{.5mm}}}}
$\displaystyle G_{1}^{\max}$ & = & $\textcolor{red}{X}$ & $\otimes$ & $X$ & $\otimes$ & $Z$ & $\otimes$ & $Z$ & $\otimes$ & $\mathbbm{1}_2$ & $\otimes$ & $X$ & $\otimes$ & $X$,\\[1ex]
$\displaystyle G_{2}^{\max}$ & = & $Z$ & $\otimes$ & $\textcolor{red}{Z}$ & $\otimes$ & $Z$ & $\otimes$ & $\mathbbm{1}_2$ & $\otimes$ & $X$ & $\otimes$ & $X$ & $\otimes$ & $\mathbbm{1}_2$, \\[1ex]
$\displaystyle G_{3}^{\max}$ & = & $\mathbbm{1}_2$ & $\otimes$ & $X$ & $\otimes$ & $\textcolor{red}{X}$ & $\otimes$ & $\textcolor{red}{X}$ & $\otimes$ & $X$ & $\otimes$ & $\mathbbm{1}_2$ & $\otimes$ & $\mathbbm{1}_2$, \\[1ex]
$\displaystyle G_{4}^{\max}$ & = & $\mathbbm{1}_2$ & $\otimes$ & $\mathbbm{1}_2$ & $\otimes$ & $\mathbbm{1}_2$ & $\otimes$ & $Z$ & $\otimes$ & $\textcolor{red}{Z}$ & $\otimes$ & $\textcolor{red}{Z}$ & $\otimes$ & $\textcolor{red}{Z}$. 
\end{tabular}
\end{equation}
The matrices highlighted in red are unique for their specific generators. 

In a similar way we can show that the additional stabilizing operators
$H_{i,j}^{\max}$ are independent of $G_i^{\max}$.
For a clearer view we first consider the example for $N=7$:
\begin{equation}
\begin{tabular}{*{15}{@{\hspace{.5mm}}c@{\hspace{.5mm}}}}
$\displaystyle H^{\max}_{1,1}$ & = & $\mathbbm{1}_2$ & $\otimes$ & $\mathbbm{1}_2$ & $\otimes$ & $Z$ & $\otimes$ & $Z$ & $\otimes$ & $\mathbbm{1}_2$ & $\otimes$ & $\mathbbm{1}_2$ & $\otimes$ & $\mathbbm{1}_2$, \\[1ex]
$\displaystyle H_{2,1}^{\max}$ & = & $\mathbbm{1}_2$ & $\otimes$ & $\mathbbm{1}_2$ & $\otimes$ & $\mathbbm{1}_2$ & $\otimes$ & $\mathbbm{1}_2$ & $\otimes$ & $X$ & $\otimes$ & $X$ & $\otimes$ & $\mathbbm{1}_2$, \\[1ex]
$\displaystyle H_{2,2}^{\max}$ & = & $\mathbbm{1}_2$ & $\otimes$ & $\mathbbm{1}_2$ & $\otimes$ & $\mathbbm{1}_2$ & $\otimes$ & $\mathbbm{1}_2$ & $\otimes$ & $\mathbbm{1}_2$ & $\otimes$ & $X$ & $\otimes$ & $X$ .
\end{tabular}
\end{equation}
It is straightforward to see that for $N=7$ these new generators are independent of (\ref{Gi7}) and it is not more difficult to draw the same conclusion for any $N$. To this end, we can once again utilise the blocks $C_{l}$. From Eq. (\ref{eq_cluster_size}) we have that the first $i+1$ blocks together contain $(i^{2}+i+2)/2$ qubits, so on all of these blocks $H^{\max}_{i,j}$ has identities. The block $C_{i+2}$ contains $i+1$ qubits, so for all $j\in\{0,\dots,i\}$ only nontrivial matrices of $H^{\max}_{i,j}$ are in the $C_{i+2}$ and these matrices are equal to $P_{i+1}$, so $H^{\max}_{i,j}$ does not contain any unique matrices of stabilizing operators (\ref{eq_stab}). Hence, all $H_{i,j}^{\max}$ are independent of the generators of $\mathbb{S}_{N}^{\max}$. 

Let us finally show that $H^{\max}_{i,j}$ also form a set of independent operators. As already stated all nontrivial matrices of $H^{\max}_{i,j}$ are in $C_{i+2}$, so $H^{\max}_{i,j}$ with different $i$ are clearly independent. We are left with proving that the operators $H^{\max}_{i,j}$ with the same $i$, but different $j$ are independent from each other. Let us look at the stabilizing operator $H^{\max}_{i,1}$. It is the only generator $H^{\max}_{i,j}$ that has a nontrivial matrix on the first qubit of $C_{i+2}$, so it has to be independent. Then, the generator $H^{\max}_{i,2}$ has its first nontrivial matrix at the second qubit of $C_{i+2}$ and the only generator that also has a nontrivial matrix on this qubit is $H^{\max}_{i,1}$. However $H^{\max}_{i,1}$ is independent, and so is $H^{\max}_{i,2}$. This can be done recursively for all $j$ which finally proves that $G_i^{\max}$ together with $H_{i,j}^{\max}$ form a set of $N$ independent generators.

\section*{acknowledgements} This work was supported by the (Polish) National Science Center through the SONATA BIS grant no. 2019/34/E/ST2/00369.


\begin{thebibliography}{99}


\bibitem{review}N. Brunner, D. Cavalcanti, S. Pironio, V. Scarani, and S.
Wehner, \textit{Bell nonlocality}, \href{https://journals.aps.org/rmp/abstract/10.1103/RevModPhys.86.419}{Rev. Mod. Phys. \textbf{86}, 419 (2014)}.

\bibitem{dimension}N. Brunner \textit{et al.}, \textit{Testing the
dimension of Hilbert spaces}, \href{https://journals.aps.org/prl/abstract/10.1103/PhysRevLett.100.210503}{Phys. Rev. Lett. \textbf{100}, 210503 (2008)}.

\bibitem{DIentanglement}J.-D. Bancal, N. Gisin, Y.-C. Liang, and S. Pironio,
\textit{Device-Independent Witnesses of Genuine Multipartite Entanglement}, \href{https://journals.aps.org/prl/abstract/10.1103/PhysRevLett.106.250404}{Phys. Rev. Lett. \textbf{106}, 250404 (2011)}.



\bibitem{randomness1}R. Colbeck, Ph.D. thesis, University of Cambridge, 2006;
R. Colbeck and A. Kent, \textit{Private randomness expansion with untrusted devices}, \href{https://iopscience.iop.org/article/10.1088/1751-8113/44/9/095305}{J. Phys. A \textbf{44}, 095305 (2011)}.

\bibitem{randomness2}S. Pironio \textit{et al.},  \textit{Random numbers certified
by Bell’s theorem}, \href{https://www.nature.com/articles/nature09008}{Nature \textbf{464}, 1021 (2010)}.

\bibitem{MY1}D. Mayers and A. Yao, \textit{Quantum cryptography
with imperfect apparatus}, Proceedings
39th Annual Symposium on Foundations of
Computer Science, 1998.

\bibitem{MY2}D. Mayers and A. Yao, \textit{Self testing quantum
apparatus}, \href{http://www.rintonpress.com/journals/qiconline.html#v4n4}{Quant. Inf. Comp. \textbf{4}, 273 (2004)}.

\bibitem{self-review}I. \v{S}upi\'c and J. Bowles, 
\textit{Self-testing of quantum systems: a review}, 
\href{https://quantum-journal.org/papers/q-2020-09-30-337/}{Quantum \textbf{4}, 337 (2020)}.

\bibitem{McKague}M. McKague, T. H. Yang, and V. Scarani,
\textit{Robust self-testing of the singlet}, \href{https://iopscience.iop.org/article/10.1088/1751-8113/45/45/455304}{J. Phys.
A: Math. Theor. \textbf{45}, 455304 (2012)}.



\bibitem{YangNavascues13}T. H. Yang and M. Navascu\'es, 
\textit{Robust self-testing of unknown quantum systems into
any entangled two-qubit states}, \href{https://journals.aps.org/pra/abstract/10.1103/PhysRevA.87.050102}{Phys. Rev.
A \textbf{87}, 050102(R) (2013)}. 

\bibitem{Bamps}C. Bamps and S. Pironio, \textit{Sum-of-squares
decompositions for a family of ClauserHorne-Shimony-Holt-like inequalities and
their application to self-testing}, \href{https://journals.aps.org/pra/abstract/10.1103/PhysRevA.91.052111}{Phys. Rev.
A \textbf{91}, 052111 (2015)}.



\bibitem{Coladangelo}A. Coladangelo, K. T. Goh, and V. Scarani,
\textit{All pure bipartite entangled states can be
self-tested}, \href{https://www.nature.com/articles/ncomms15485}{Nat. Commun. \textbf{8}, 15485 (2017)}.







\bibitem{Shubhayan}S. Sarkar, D. Saha, J. Kaniewski, R. Augusiak, \textit{Self-testing quantum systems of arbitrary local dimension with minimal number of measurements}, \href{https://arxiv.org/abs/1909.12722}{arXiv:1909.12722}.

\bibitem{obs1}J. Kaniewski, \textit{ Self-testing of binary observables
based on commutation}, \href{https://journals.aps.org/pra/abstract/10.1103/PhysRevA.95.062323}{Phys. Rev. A \textbf{95}, 062323 (2017)}.

\bibitem{Quantum}J. Kaniewski, I. \v{S}upić, J. Tura, F. Baccari, A. Salavrakos, R. Augusiak, \textit{Maximal nonlocality from maximal entanglement and mutually unbiased bases, and self-testing of two-qutrit quantum systems}, \href{https://quantum-journal.org/papers/q-2019-10-24-198/}{Quantum \textbf{3}, 198 (2019)}.

\bibitem{steering}I. \v{S}upi\'c and M. J. Hoban, \textit{Self-testing through the EPR-steering}, \href{https://iopscience.iop.org/article/10.1088/1367-2630/18/7/075006}{New J. Phys. \textbf{18}, 075006 (2016)}.

\bibitem{PM}A. Tavakoli, J. Kaniewski, T. V\'ertesi, D. Rosset, and N.
Brunner, \textit{Self-testing quantum states and measurements in the prepare-and-measure scenario}, \href{https://journals.aps.org/pra/abstract/10.1103/PhysRevA.98.062307}{Phys. Rev. A \textbf{98}, 062307 (2018)}.

\bibitem{context1}K. Bharti, M. Ray, A. Varvitsiotis, N. A. Warsi, A. Cabello,
and L.-C. Kwek, \textit{Robust Self-Testing of Quantum Systems via Noncontextuality Inequalities}, \href{https://journals.aps.org/prl/abstract/10.1103/PhysRevLett.122.250403}{Phys. Rev. Lett. \textbf{122}, 250403 (2019)}.

\bibitem{Knill}A. A. Md. Irfan, K. Mayer, G. Ortiz, E. Knill, 
\textit{Certified quantum measurement of Majorana fermions}, \href{https://journals.aps.org/pra/abstract/10.1103/PhysRevA.101.032106}{Phys. Rev. A
\textbf{101}, 032106 (2020)}.

\bibitem{context2}D. Saha, R. Santos, R. Augusiak, \textit{Sum-of-squares decompositions for a family of noncontextuality inequalities and self-testing of quantum devices}, \href{https://quantum-journal.org/papers/q-2020-08-03-302/}{Quantum \textbf{4}, 302 (2020)}.

\bibitem{Demianowicz}M. Demianowicz and R. Augusiak, \textit{From unextendible product bases to genuinely entangled subspaces}, \href{https://journals.aps.org/pra/abstract/10.1103/PhysRevA.98.012313}{Phys. Rev. A \textbf{98}, 012313 (2018)}.

\bibitem{subspaces}F. Baccari, R. Augusiak, I. \v{S}upi\'c, A. Ac\'in, \textit{Device-independent certification of entangled subspaces}, \href{https://arxiv.org/abs/2003.02285}{arXiv:2003.02285}. 

\bibitem{toric}A. Y. Kitaev, \textit{Fault-tolerant quantum computation by anyons}, 
\href{https://www.sciencedirect.com/science/article/pii/S0003491602000180?via%3Dihub}{Ann. Phys. \textbf{3}, 2 (2003).}

\bibitem{Waegell}M. Waegell and J. Dressel, \textit{Benchmarks of nonclassicality for qubit arrays}, \href{https://www.nature.com/articles/s41534-019-0181-8}{npj Quantum Information \textbf{5}, 66 (2019).}

\bibitem{graphstates}F. Baccari, R. Augusiak, I. \v{S}upi\'c, J. Tura, and A. Ac\' in,
\textit{Scalable Bell Inequalities for Qubit Graph States and Robust Self-Testing}, \href{https://journals.aps.org/prl/abstract/10.1103/PhysRevLett.124.020402}{Phys. Rev. Lett. \textbf{124}, 020402 (2020)}.

\bibitem{Bhat}B. V. R. Bhat, \textit{A completely entangled subspace of maximal dimension}, \href{https://www.worldscientific.com/doi/10.1142/S0219749906001797}{Int. J. Quantum Inf. \textbf{04}, 325 (2006)}.

\bibitem{Parthasarathy}K. Parthasarathy, \textit{On the maximal dimension of a completely entangled subspace for finite level quantum systems}, \href{https://www.ias.ac.in/describe/article/pmsc/114/04/0365-0374}{Proceedings - Mathematical Sciences \textbf{114}, 365 (2004)}.


\bibitem{CHSH}J. F. Clauser, M. A. Horne, A. Shimony, and R. A. Holt, \textit{Proposed Experiment to Test Local Hidden-Variable Theories}, \href{https://journals.aps.org/prl/abstract/10.1103/PhysRevLett.23.880}{Phys. Rev. Lett. \textbf{23}, 880 (1969)}.

\bibitem{NielsenChuang}M. A. Nielsen and I. Chuang, \textit{Quantum Computation and Quantum Information}, Cambridge University Press (Cambridge, 2010).

\bibitem{GottesmanPhD}D. Gottesman, \textit{Stabilizer  Codes  and  Quantum  Error Correction}, PhD thesis, California Institute of Technology (1997).

\bibitem{Steane}A.  M.  Steane, \textit{Error  correcting  codes  in  quantum  theory}, \href{https://journals.aps.org/prl/abstract/10.1103/PhysRevLett.77.793}{Phys. Rev. Lett. \textbf{77}, 793 (1996)}.

\bibitem{Shor}P. W. Shor, \textit{Scheme for reducing decoherence in quantum computer memory}, \href{https://journals.aps.org/pra/abstract/10.1103/PhysRevA.52.R2493}{Phys. Rev. A \textbf{52}, 2493(R) (1995)}.


\bibitem{graphstates0}O. G\"uhne, G. T\'oth, P. Hyllus, and H. J. Briegel,
\textit{Bell Inequalities for Graph States}, \href{https://journals.aps.org/prl/abstract/10.1103/PhysRevLett.95.120405}{Phys. Rev. Lett. \textbf{95}, 120405 (2005)}.

\bibitem{Koh}K. T. Goh, J. Kaniewski, E. Wolfe, T. V\'ertesi,
X. Wu, Y. Cai, Y.-C. Liang, and V. Scarani, \textit{Geometry of the set of quantum correlations}, \href{https://link.aps.
org/doi/10.1103/PhysRevA.97.022104}{Phys.
Rev. A, \textbf{97}, 022104 (2018). }



\bibitem{PR}S. Popescu and D. Rohrlich, \textit{Which states violate Bell’s
inequality maximally?}, \href{https://www.sciencedirect.com/science/article/abs/pii/0375960192908198}{Phys. Lett. A \textbf{169}, 411 (1992)}.

\bibitem{Jed5}J. Kaniewski, \textit{Analytic and nearly optimal self-testing
bounds for the Clauser-Horne-Shimony-Holt and Mermin inequalities}, \href{https://journals.aps.org/prl/abstract/10.1103/PhysRevLett.117.070402}{Phys. Rev. Lett. \textbf{117}, 070402 (2016)}. 












\bibitem{Hein}M. Hein, W. D\"ur, J. Eisert, R. Raussendorf, M. Van den Nest, H. -J. Briegel, \textit{Entanglement in Graph States and its Applications}, \href{https://arxiv.org/abs/quant-ph/0602096}{arXiv:quant-ph/0602096}.


\bibitem{Magisterka}O. Makuta, \textit{Self-testing of entangled subspaces in multi-qubit Hilbert spaces}, M.Sc. thesis, University of Warsaw (2020). 




\end{thebibliography}
\end{document}